\documentclass[conference]{IEEEtran}

\IEEEoverridecommandlockouts

\usepackage{lineno}
\usepackage{xspace}
\usepackage{xcolor}

\def\eg{\emph{e.g.,}\xspace}

\def\ie{\emph{i.e.,}\xspace}
\def\etal{\emph{et al.}\xspace}

\usepackage{shortcuts}
\usepackage{enumerate}
\usepackage{amsthm,amsmath,amssymb,bm,mathtools,resizegather}
\usepackage{algorithm,algorithmic}
\usepackage{makecell}
\usepackage{float}
\usepackage{wasysym}
\usepackage{tcolorbox}
\usepackage{url}

\pagestyle{empty}
\usepackage{array}
\usepackage{multirow}
\usepackage{graphicx}
\usepackage{booktabs}

\usepackage{threeparttable}
\usepackage{tikz}
\usepackage{mathtools}
\usepackage{subcaption}
\newtheorem{theorem}{Theorem}

\newtheorem{definition}{Definition}
\newtheorem*{problem*}{Problem}

\usepackage{enumitem}

\usepackage{pifont}
\newcommand{\cmark}{\ding{51}}
\newcommand{\xmark}{\ding{55}}
\usepackage{amssymb}
\usepackage{hyperref}

\def\BibTeX{{\rm B\kern-.05em{\sc i\kern-.025em b}\kern-.08em
    T\kern-.1667em\lower.7ex\hbox{E}\kern-.125emX}}

\usepackage{filecontents}
\usepackage{marvosym}

\begin{document}

\title{{\fontsize{20}{24}\selectfont Re-Key-Free, Risky-Free: Adaptable Model Usage Control}
\thanks{\textsuperscript{$\star$} Zihan Wang is supported by the Google PhD Fellowship.}
}

\author{
\IEEEauthorblockN{
Zihan Wang\textsuperscript{$\star$}\IEEEauthorrefmark{1}\IEEEauthorrefmark{2},
Zhongkui Ma\IEEEauthorrefmark{1},
Xinguo Feng\IEEEauthorrefmark{1},
Chuan Yan\IEEEauthorrefmark{1},
Dongge Liu\IEEEauthorrefmark{3}, \\[1pt]
Ruoxi Sun\IEEEauthorrefmark{2},
Derui Wang\IEEEauthorrefmark{2},
Minhui Xue\IEEEauthorrefmark{2}\IEEEauthorrefmark{4},
Guangdong Bai\IEEEauthorrefmark{5}
}\\[-8pt]
\IEEEauthorblockA{\IEEEauthorrefmark{1}\textit{The University of Queensland, Australia} \quad\IEEEauthorrefmark{2}\textit{CSIRO, Australia}\quad\IEEEauthorrefmark{3}\textit{Google LLC, Australia} }
\IEEEauthorblockA{\IEEEauthorrefmark{4}\textit{Responsible AI Research Centre, Adelaide University, Australia}\quad\IEEEauthorrefmark{5}\textit{City University of Hong Kong, China}}
}

\maketitle

\begin{abstract}
Deep neural networks (DNNs) have become valuable intellectual property of model owners, due to the substantial resources required for their development.
To protect these assets in the deployed environment, recent research has proposed model usage control mechanisms to ensure models cannot be used without proper authorization.
These methods typically lock the utility of the model by embedding an access key into its parameters.
However, they often assume static deployment, and largely fail to withstand continual post-deployment model updates, such as fine-tuning or task-specific adaptation.

In this paper, we propose \textbf{\codename}, to endow key-based model usage control with \emph{adaptability} during model evolution.
It strategically selects a subset of weights as an intrinsic access key, which enables all model updates to be confined to this key throughout the evolution lifecycle.
\codename enables using the access key to restore the keyed model to the \emph{latest} authorized states without redistributing the entire network (\ie \underline{ada}ptation), and frees the model owner from full re-keying after each model update (\ie \underline{loc}k preservation).
We establish a formal foundation to underpin \codename, providing crucial bounds such as the errors introduced by updates restricted to the access key.
Experiments across six vision and language benchmarks and six modern architectures spanning CNNs and Transformers demonstrate that \codename achieves high accuracy under significant updates while retaining robust protections.
Specifically, authorized usages consistently achieve strong task-specific performance, while unauthorized usage accuracy drops to near-random guessing levels (\eg 1.02\% on CIFAR-100), compared to up to 87.01\% under prior key-based defenses.
This shows that \codename can offer a practical solution for adaptive and protected DNN deployment in evolving real-world scenarios.
\end{abstract}

\begin{IEEEkeywords}
Intellectual property, authorization, neural networks, model adaptation.
\end{IEEEkeywords}

\section{Introduction}
\label{sec:introduction}

Major AI and cloud providers such as Google, Microsoft, and Amazon have increasingly emphasized hardware-backed execution as a foundational component of secure AI deployment. For example, Google's recent ``Private AI Compute'' places models such as Gemini within TPU-based enclaves, which harden the computational environment against external compromise~\cite{google_private_ai_compute}. Although these trusted execution infrastructures substantially improve platform security, they secure only the runtime environment and not the model artifact itself. Once a high-value model is distributed, whether through licensing, integration into downstream pipelines, or deployment on customer-managed infrastructure, a fully functional copy can leave the enclave boundary. At that point, intellectual-property control becomes structurally fragile because unrestricted copies may propagate across organizations or jurisdictions, and dishonest parties can continue operating, reselling, or embedding the model at full fidelity~\cite{zhou2023nnsplitter,sidechannel2024usenix,shuofeng24trans,wang25aim}. Such leakage compromises the model owner's competitive advantage, devalues the substantial computational and data investment embodied in the model, and exposes an inherent limitation in relying solely on hardware isolation or access control. In summary, hardware-centric protections do not mitigate misuse once the model is no longer confined to the trusted platform.

This gap indicates a fundamental requirement that goes beyond enclave-style infrastructure, namely a \textit{model-intrinsic mechanism for usage control} that can restrict a model's utility even when the hardware boundary is no longer present. Our work adopts this perspective by regulating the functional capacity of the model itself rather than relying solely on control over the execution environment. We demonstrate that hardware enclaves and model-level usage control are complementary layers. Enclaves provide secure storage and execution of secret material, while usage-control mechanisms ensure that the model remains nonfunctional without proper authorization. This layered view forms the central motivation of our work and establishes \codename as a missing component in the contemporary AI model-protection stack.

\begin{figure*}[t!]
    \centering
    \includegraphics[width=1\linewidth]{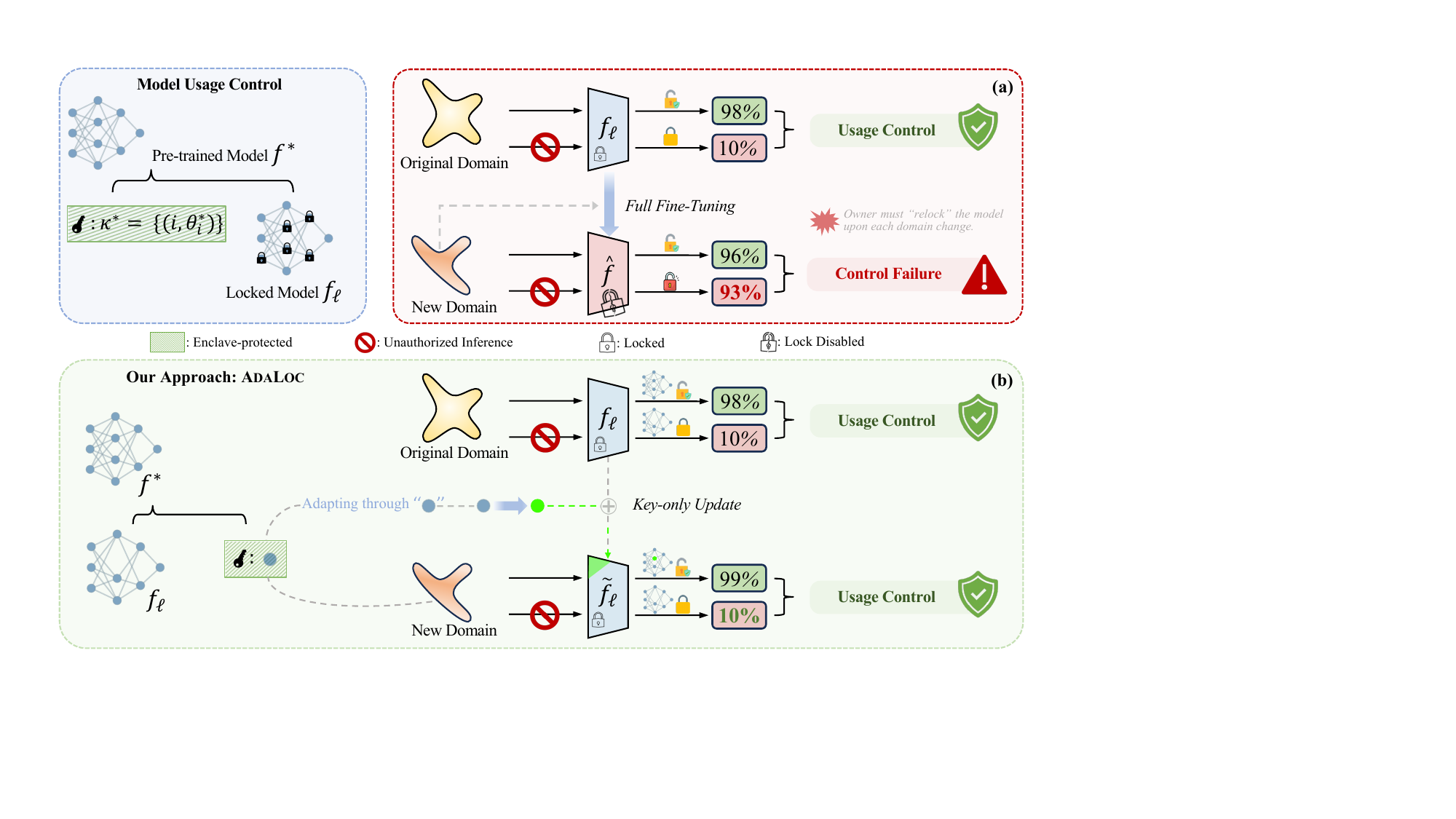}
\caption{
\textbf{General model usage control (a).}
\emph{Locking}: a pre-trained network $f^{*}$ is converted to a locked model $f_{\ell}$ plus a securely stored \emph{key} $\kappa^{*}$ (selected weight indices and values). Authorized inference returns normal outputs; unauthorized inference renders the model completely unusable (\eg random-guess level for 10-class classification task).
\emph{Update \& failure}: adapting the locked model to a new domain often invalidates the lock, restoring full access for attackers and forcing costly re-keying and model redistribution.
\textbf{Our approach (b).}
Instead of relying on any ``external'' locking mechanism, we designate a compact block of neurons themselves as the \emph{key}, and restrict all updates to that block. Because the rest of the network never changes, the lock remains intact and the updated model is still unusable without the \emph{key}.
}
\label{fig:usage_control}
\end{figure*}

Recent research has approached model usage control through \emph{model keying}~\cite{pyone2020key,chen2018piracy,Xue_2023,zhou2023nnsplitter,wang2024corelocker}.
It ensures that without the correct access key,\footnote{Throughout this paper, the term \textit{key} denotes a usage-control primitive rather than a cryptographic key.} the accuracy of the model collapses to near-random guessing, making unauthorized use effectively worthless.
Data-based keying mechanisms, such as Pyone~\etal~\cite{pyone2020key} and Chen~\etal~\cite{chen2018piracy}, embed a secret key into the training data during preprocessing, and train the model to operate only on \emph{key-preprocessed} inputs.
Model-based mechanisms, such as Xue~\etal~\cite{Xue_2023} and Zhou~\etal~\cite{zhou2023nnsplitter}, embed the key directly into internal model parameters, undermining the utility of the entire model unless the key is known by the model controller for neutralization.
These approaches mainly target the \emph{accessibility} aspect of model usage control, and have demonstrated their effectiveness in locking model utility in \emph{static} deployment.
However, they largely assume the model parameters remain unchanged after deployment.

This assumption fails to hold in practical machine learning workflows, where models are frequently fine-tuned or incrementally updated to adapt to new tasks, data distributions, or environments.
Indeed, a recent study~\cite{malec2024genai} reports that 44\% of organizations piloting AI models in 2024 relied on frequent updates, often weekly or monthly, to fine-tune established backbones~(\eg Vision Transformers~\cite{dosovitskiy2020image}).
Since even modest amounts of fine-tuning can significantly alter the parameter space, continual model refinement can easily overwrite the protection provided by existing keying schemes~(see Figure~\ref{fig:usage_control}{(a)}), making costly ``re-keying'' necessary.
This reveals a fundamental limitation in existing model keying mechanisms: \emph{accessibility} is not complemented by \emph{adaptability}.
An important research problem remains largely unexplored: \emph{how to enable model usage control under rapid and continual adaptation cycles}?

\paragraph{Our work}
We develop an adaptable model usage control framework that unifies accessibility and adaptability, named \codename~(\underline{ada}ptable \underline{loc}k).
As illustrated in Figure~\ref{fig:usage_control}{(b)}, it effectively ensures \textit{(i)} that model usability is strictly conditioned on possessing the \emph{access key}, thereby blocking unauthorized usage, and \textit{(ii)} that authorized users can efficiently update the keyed model,
eliminating the costly need to redistribute the entire network after each update.
During model updates, \codename confines the updates exclusively to the \textit{access key} (hereafter, the \emph{key}), avoiding retraining the model or altering other parts of the model.
At deployment, restoring the \emph{key} reinstates the network's core functionality, enabling robust performance on newly introduced tasks or domains.
\codename is both \emph{training data-agnostic} and \emph{retraining-free}, making it directly applicable to off-the-shelf pre-trained models and easily integrated across diverse architectures.

\codename adopts a simple yet effective strategy that leverages a minimal subset of the model's high-impact neurons as the \emph{key}.
This solution builds on the fundamental tendency known as \emph{impact concentration}~\cite{qian2021probabilistic,blalock2020state,wang2024corelocker}, where a network's predictive power consistently hinges on a small set of decisive weights.
The minimal subset ensures the \emph{key} is compact and easily manageable~(\ie \emph{key scalability}).
Through \codename, the model owner adapts the model across updates by modifying only these \emph{key} weights without touching others~(\ie \emph{effective domain adaptation}).
Removing the \emph{key} induces an accuracy drop to a completely unusable level, whereas reinserting it revives full performance in a single step (\ie \emph{restoration disparity}).
Moreover, reconstructing the missing weights is computationally infeasible~(\ie \emph{complex unauthorized restoration}).
It may take an IBM Summit supercomputer millions of years to crack a multi-layer network~\cite{chaotic2021lin}.

To establish a solid formal foundation for \codename, we for the first time formalize a comprehensive notion of model usage control that integrates accessibility and adaptability.
Accessibility ensures that a locked model $f_{\ell}$ behaves like an ideally unusable model $f^{0}$ in the absence of a valid \emph{key}~(\ie $f_{\ell}\approx f^0$), and adaptability requires this guarantee to persist after any authorized model update.
We establish crucial bounds for these two properties, which define the allowable parameter changes among the original model $f^*$, the locked model $f_{\ell}$, and the \emph{key}-only updated model $\tilde{f}_{\ell}$, in order to guarantee that the model is unusable without the \emph{key} and returns to full performance once the \emph{key} is restored.
Using layer-wise Lipschitz bounds and sub-Gaussian tail bounds, we bound the exact region in weight-space where a \emph{key}-only update can move while the network's output stays indistinguishable from full fine-tuning.
Conversely, we prove that removing the \emph{key} part collapses the network's output variance to that of an unusable reference model $f^0$.

We evaluate \codename on a wide range of image and text benchmarks using standard convolutional and transformer backbones: DenseNet-121, ResNet-152, and ConvNeXt-V2 for vision tasks, and RoBERTa, BERT, and DeBERTa for language tasks.
The evaluation addresses two questions: \textit{(i)} can a model adapt when updates are confined to the \emph{key}, and \textit{(ii)} does \codename's usage control remain effective across datasets and architectures.
Our results demonstrate that in every benchmark, the authorized model matches or even exceeds full fine-tuning accuracy with the \emph{key}, while removing the \emph{key} drives accuracy to near-random levels (\eg $1.02\%$ on CIFAR-100; see Table~\ref{tab:usage_control}).
All existing schemes still expose 76--93\% accuracy under the same setup.
These results confirm that \codename enables lightweight continual adaptation while enforcing strict usage control.

\paragraph{Contributions} We summarize our main contributions as follows:
\begin{itemize}[leftmargin=*]
\item \textbf{An adaptable usage control paradigm.} We introduce the concept of \emph{adaptable} model usage control, which enables frequent updates (\eg fine-tuning) to the keyed model while preserving the validity and functionality of the original keying mechanism. This paradigm bridges the gap between \emph{adaptability} and \emph{accessibility}, addressing scenarios where models must evolve while retaining secure usage restrictions.

\item \textbf{A solid formal foundation for usage control.} We establish a formal foundation for analyzing model usage control by, for the first time, formalizing a comprehensive notion that unifies \emph{accessibility} and \emph{adaptability}. We further derive crucial theoretical bounds that guarantee the efficacy of \codename, providing provable assurances for secure and adaptable AI model usage.

\item \textbf{An empirical evaluation.} We evaluate \codename on representative datasets and real-world models. Across all settings, \codename \textit{consistently} delivers performance fully comparable to full fine-tuning for authorized users, while models without the \emph{key} degrade to a completely unusable level. These results demonstrate that \codename enforces robust usage control while supporting seamless model adaptation across diverse datasets and architectures, making usage control practical in modern machine learning workflows.

\end{itemize}

\section{Background}
\label{sec:preliminaries}

This section introduces the foundational concepts required to understand this work.
Section~\ref{sec:dnn_definition} formalizes neural network training, and Section~\ref{sec:usage_control} formulates the problem of model usage control.

\subsection{Neural Networks}
\label{sec:dnn_definition}
To facilitate understanding, Table~\ref{tab:notation} lists the main notations used throughout the paper.
Unless otherwise stated, a symbol without superscripts denotes a generic case.

{A neural network is represented as a function $f(\bm{x}; \bm{\theta})$, where $\bm{x}\in\mathbb{R}^{d_{\mathrm{in}}}$ is the input and $\bm{\theta}\in\mathbb{R}^d$ denotes the parameters; $d_{\mathrm{in}}$ and $d$ are the input and parameter dimensions, respectively.
The choice of $\bm{\theta}$ governs the model's behavior and performance.
We use $\bm{y}^*$ to denote the target output for a given input $\bm{x}^*$.
Because our focus is on training, we assume inputs and outputs originate from the training distribution.
Below we formalize model training and model functionality.
}

\begin{table}[t]
\centering
\caption{Main notations used in this paper.}
\renewcommand{\arraystretch}{1.2}
\begin{tabular}{l p{0.75\linewidth}}
\toprule
\textbf{Notation} & \textbf{Description} \\
\midrule
$\mathcal{D}^*$
&
Original dataset. The superscript $^*$ indicates the original domain.
$(\bm{x}^*, \bm{y}^*)$ is a data-label pair from $\mathcal{D}^*$. \\
$\hat{\mathcal{D}}$ &
New dataset used for fine-tuning. $(\hat{\bm{x}}, \hat{\bm{y}})$ is a data-label pair sampled from $\hat{\mathcal{D}}$.\\
$\bm{\theta}^*$ &
Post-training parameters on $\mathcal{D}^*$, producing output $\bm{y}^*$. \\
$\hat{\bm{\theta}}$ &
Parameters obtained by \emph{full} fine-tuning on $\hat{\mathcal{D}}$. \\
$\tilde{\bm{\theta}}$ &
Parameters obtained by \emph{key-only} (partial) fine-tuning.\\
$f^{0}$ &
Ideally unusable reference model. \\
$\square_{\ell}$ &
The \emph{locked} variant of an object. \\
\bottomrule
\end{tabular}
\label{tab:notation}
\end{table}

\begin{definition}[Parameter Space]
\label{def:parameter_space}
The \emph{parameter space} of $f(\bm{x}; \bm{\theta})$ is the $d$-dimensional set $\{\bm{\theta}\in\mathbb{R}^d\}$; each point specifies a unique parameter configuration.
\end{definition}

\begin{definition}[Model Training]
\label{def:model_training}
\emph{Model training} optimizes parameters $\bm{\theta}$ to produce $\bm{\theta}^*$ by minimizing a loss function over a sequence of steps.
\end{definition}

\begin{definition}[Parameter Update]
\label{def:parameter_update}
The \emph{parameter update} is $\Delta\bm{\theta}^*=\bm{\theta}^*-\bm{\theta}$, the cumulative change applied during training.
\end{definition}

\noindent{Training halts once the loss $\mathcal{L}(f(\bm{x}^*;\bm{\theta}),\bm{y}^*)$ drops below a threshold or stops improving significantly.
The resulting parameters are typically locally optimal yet adequate for practical metrics such as accuracy or F1-score.
}

\begin{definition}[Model Functionality]
\label{def:model_functionality}
Given $\mathcal{D}^*$, a model $f(\bm{x};\bm{\theta}^*)$ satisfies \emph{functionality} if
\begin{gather}
\begin{aligned}
\mathcal{M}(f(\bm{x}^*;\bm{\theta}^*),\bm{y}^*)\le\epsilon_0,
\quad
\forall(\bm{x}^*,\bm{y}^*)\in\mathcal{D}^*,
\end{aligned}
\end{gather}
where $\mathcal{M}$ is a non-negative performance metric and $\epsilon_0$ is a small positive constant.
\end{definition}

Below we formulate the problem of model usage control, focusing on key-based mechanisms central to our approach.

\subsection{Model Usage Control}
\label{sec:usage_control}

Model usage control protects a model from unauthorized exploitation.
Commonly, a \emph{key} specifies a subset of parameters whose values gate access.

\begin{definition}[Key]
\label{def:key}
A \emph{key} $\kappa$ is a finite set of index-value pairs 
\begin{gather}
\begin{aligned}
\kappa=\{(i,v_i)\mid i\in\mathcal{S},v_i\in\mathbb{R}\},
\end{aligned}
\end{gather}
where $\mathcal{S}\subseteq\{1,\dots,d\}$ indexes the protected parameters.
\end{definition}
Each pair {specifies the value to be written back at its index on unlock}.
We denote by $\mathcal{K}$ the set of all possible keys.
{The pre-trained model corresponds to the key $\kappa^* = \{(i, \theta^*_i) \mid i \in \mathcal{S}\} \in \mathcal{K}$.}
A valid \emph{key} must satisfy the following properties:

\begin{itemize}
    \item \textbf{Compactness}: The \emph{key} covers only a minimal subset of parameters (Section~\ref{sec:usage_control:key_localization}), making it lightweight and suitable for practical deployment.
    \item \textbf{Unauthorized-use prevention}: Without the \emph{key}, the model's performance degrades and becomes unusable (Section~\ref{sec:usage_control:model_locking}).
    \item \textbf{Restoration}: With the \emph{key}, the model regains its original functionality (Section~\ref{sec:usage_control:model_unlocking}).
\end{itemize}

\subsubsection{Key Localization}
\label{sec:usage_control:key_localization}
Key localization aims to identify a small subset of model parameters that are highly sensitive to perturbations, such that modifying them alone can significantly affect model behavior. These sensitive weights form the basis for constructing an effective \emph{key}.
The goal is to select a minimal set $\mathcal{S} \subseteq \{1, \dots, d\}$, with $|\mathcal{S}| \ll d$, to ensure the resulting \emph{key} remains compact and efficient while retaining strong control over model access.

\subsubsection{Model Locking}
\label{sec:usage_control:model_locking}
To lock the model, we apply a transformation
\begin{gather}
\Phi : \mathbb{R}^d \times \mathcal{K} \to \mathbb{R}^d,
\quad
\bm{\theta}^*_\ell = \Phi(\bm{\theta}^*,\kappa^*),
\end{gather}
{which sets $\theta^*_i$ to zero for every $i \in \mathcal{S}$ and acts as the identity on the remaining coordinates.} The locked {parameter vector satisfies}
\begin{align*}
\mathcal{M}\bigl(f(\bm{x}^*;\bm{\theta}^*_\ell),\bm{y}^*\bigr) \ge M,
\end{align*}
{with $M$ chosen large enough that the model is unusable without the \emph{key}.}

\subsubsection{Model Unlocking}
\label{sec:usage_control:model_unlocking}

Unlocking reverses the locking transformation with the same key.
Let
\begin{align}
\Psi : \mathbb{R}^d \times \mathcal{K} \to \mathbb{R}^d,
\quad
\bm{\theta}^* = \Psi(\bm{\theta}^*_\ell,\kappa^*),
\end{align}
where $\Psi$ restores, for every $i\in\mathcal{S}$, {the original weight $\theta^*_i$} that was replaced during locking.
The recovered parameter vector $\bm{\theta}^*$ reinstates full model functionality, enabling normal inference. 

\section{Problem Formulation}
\label{sec:problem_formulation}

Figure \ref{fig:usage_control} exposes a critical weakness in current usage control schemes. Locking a pre-trained network $f^*$ with a \emph{key} $\kappa^*$ succeeds only while the model stays unchanged. 
Once updated (\eg fine-tuned), whether to add new features, serve a new client, or track shifting data, even small weight changes can invalidate the original lock. Unauthorized users then regain full predictive power, forcing the owner to re-key the model and repeat the costly distribution and deployment process. In rapidly evolving production environments, where models may update daily or even hourly, this break-and-rekey cycle is unsustainable. A usage-control scheme whose \emph{key} remains effective across routine updates is therefore indispensable.

\subsection{Threat Model}
\label{sec:threat_model}

This section outlines the scope of \codename and specifies the attacks it defends against, in terms of the information adversaries possess and the operations they can perform on the obtained model.

We consider a model controller who deploys a high-value neural network and may use trusted infrastructure (\eg hardware-backed enclaves or TEEs) to manage access keys during normal operation.
\codename is designed to complement such platform-level protections by ensuring that, even if the locked model itself is leaked outside these environments, it remains unusable without the \emph{key}. 

The adversary seeks to acquire a complete copy of the locked neural-network model under the control of the model controller. Possession of such a model enables various abuses, such as monetizing it through unauthorized commercial services or generating adversarial examples~\cite{goodfellow2014explaining,yuan2019adversarial} against the model owner's legitimate offerings.
The focus of our work is to endow model usage control with \emph{adaptability} during model evolution, as defined in Section~\ref{sec:introduction}, so that this protection persists across continual updates.
Access keys themselves can be managed and released by hardware-assisted mechanisms~\cite{chakraborty2020hardware} or TEE-based systems~\cite{sun2023shadownet,wang2024corelocker}, which we treat as trusted components responsible for key storage and release.

\paragraph{Adversary capabilities}
The adversary is given \emph{white-box} access to the leaked locked model: they can inspect and manipulate all weight parameters, and they know the exact network architecture used during training.
This represents a conservative assumption that favors the adversary and is realistic, since most industrial systems adopt well-published DNN designs with proven performance.
The adversary may also perform arbitrary post-processing on the leaked model, including fine-tuning, distillation, pruning, or inserting adapters, using their own compute and a limited budget of in-distribution data.
They are free to combine this data with the leaked locked model in any way they choose.

We do \emph{not} consider attacks that compromise the underlying hardware or TEE itself (\eg breaking the enclave or extracting keys from a hardware root of trust); such attacks are orthogonal to our goal and are typically addressed by the platform-security layer.
Under this threat model, \codename aims to ensure that the leaked model remains practically unusable without the \emph{key}, even under white-box access and adaptive post-processing.

\subsection{Usage Control for Adaptive AI Systems}
\label{sec:adaptable_usage_problem}

When a locked model is fine-tuned, its weights drift away from the version used to generate the original \emph{key} {(Definition~\ref{def:parameter_update})}.
If the \emph{key} is regenerated every time, deployment becomes expensive and fragile.
The challenge is therefore to maintain a \emph{single, compact key} that
\textit{(i)} continues to disable the model after \emph{any} update, and
\textit{(ii)} still restores full utility for authorized users.

Formally, let $f(\cdot;{\bm{\theta}^*_\ell})$ be {the} locked model with \emph{key} $\kappa^*$ protecting indices $\mathcal{S}$ {(Section~\ref{sec:usage_control:model_locking})}.
After full fine-tuning we obtain parameters $\hat{\bm{\theta}}$ and outputs $\hat{\bm{y}}$ on a new dataset $\hat{\mathcal{D}}$. {The post-update key $\hat{\kappa} = \{(i, \hat\theta_i) \mid i \in \mathcal{S}\}$, where $\hat\theta_i$ is the $i$-th coordinate of $\hat{\bm{\theta}}$, records the values at the same protected indices $\mathcal{S}$ as $\kappa^*$. The locked counterpart is $\hat{\bm{\theta}}_\ell = \Phi(\hat{\bm{\theta}}, \hat{\kappa})$.}
We seek a \emph{key} such that:

\begin{problem*}[Adaptable Usage Control]
\label{problem:key_usage_control}
Requiring that the performance metric $\mathcal{M}(f(\hat{\bm{x}};\hat{\bm{\theta}}),\hat{\bm{y}})\le\epsilon_0$, where $(\hat{\bm{x}}, \hat{\bm{y}}) \in \hat{\mathcal{D}}$,
we aim to find a key $\hat{\kappa}$
such that
    \begin{itemize}
        \item The model remains unusable without {the key}, i.e., $\mathcal{M}(f(\hat{\bm{x}}; \hat{\bm{\theta}}_\ell), \hat{\bm{y}}) \geq M$;
        \item The model functionality can be fully restored with {the key} presented, i.e., $\mathcal{M}(f(\hat{\bm{x}}; \Psi(\hat{\bm{\theta}}_\ell,\hat{\kappa})), \hat{\bm{y}}) \leq \epsilon_0$.
    \end{itemize}
\end{problem*}

Solving this problem is challenging, as it demands maintaining a stable \emph{key} structure while accommodating continuous model evolution.
In the following section, we introduce \codename, our adaptable usage-control framework designed specifically to meet these requirements.

\section{Our Approach: \codename}
\label{sec:approach}

Modern deep networks typically contain more parameters than necessary. Extensive pruning studies consistently reveal that a small subset of weights significantly determines model behavior, while the remaining parameters contribute minimally~\cite{li2017pruning,molchanov2016pruning,hyuntak2021pruning}. \codename leverages this intrinsic property by keeping the majority of parameters untouched, and restricting updates to a \emph{carefully selected, small subset} (designated as the \emph{key}) whenever the model needs to be fine-tuned. Specifically, the \emph{key} must satisfy two critical criteria:
\begin{enumerate}[label=(\textit{\roman*})]
\item Removal of the \emph{key} drastically reduces model accuracy.
\item Updating the \emph{key} alone achieves performance fully comparable to full fine-tuning.
\end{enumerate}

\noindent Both criteria naturally point toward selecting high-magnitude weights. Such weights significantly influence forward activations and typically accumulate larger gradient updates. Moreover, removing these influential weights generally \emph{incapacitates} the model, directly satisfying condition~\textit{(i)}. Fine-tuning restricted to this lower-dimensional subspace yields a partial-update parameter vector $\tilde{\bm{\theta}}$ that effectively steers the model towards the practical solution region near the fully tuned parameter vector $\hat{\bm{\theta}}$ (please refer to the heuristic illustration in Figure~\ref{fig:distribution}).

\begin{figure}[t]
    \centering
        \includegraphics[width=0.6\linewidth]{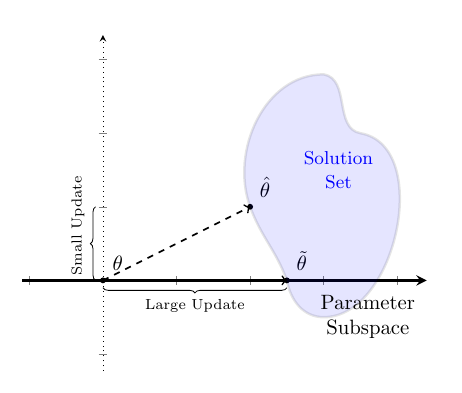}
    \caption{Heuristic illustration. Updating only a few directions with the largest parameter changes, spanning a subspace $\tilde{\bm{\theta}}$, can often reach the same solution set as full fine-tuning $\hat{\bm{\theta}}$. These high-impact directions generally align with large-magnitude weights (see Figure~\ref{fig:weight_gradient}). Theorems~\ref{thm:model_performance}--\ref{thm:small_param_gradient_conv} provide formal support for both observations, and Section~\ref{sec:experiments} empirically confirms them.
    }
    \label{fig:distribution}
\end{figure}

\subsection{Key Localization}

\codename thus adopts a straightforward yet effective heuristic by ranking parameters according to their $\ell_1$-norm and selecting the top $\rho\%$ (default $\rho = 5$) as the \emph{key}. The chosen indices form the set $\mathcal{S}$, with their initial parameter values constituting the initial adaptation \emph{key}.

The $\ell_1$-norm efficiently captures the magnitude of parameters and their relative importance in influencing network predictions. This method is computationally lightweight, inherently data-agnostic, and broadly applicable across various network architectures.
Figure~\ref{fig:horse} visually demonstrates the efficacy of this approach: filters exhibiting higher $\ell_1$-norms generally encode richer and more diverse features, whereas filters with lower magnitudes tend to capture narrower or redundant information. Thus, selecting $\mathcal{S}$ as a concise yet influential subset of parameters effectively supports impactful model adaptation.

Formally, the process of extracting the top $\rho\%$ of filters or neurons based on the $\ell_1$-norm from the $i$-th layer involves the following steps: \textit{(i)} compute the sum of absolute weights for each filter or neuron;
\textit{(ii)} rank them in descending order according to their computed sums;
\textit{(iii)} identify and remove the top $\rho\%$ of filters or neurons with the highest sum values, along with their corresponding feature maps;
\textit{(iv)} subsequently, remove filters or neurons in the following layer connected to the discarded feature maps. Filters and neurons identified and removed through this method are cataloged in the \emph{key}, clearly marking the critical parameters required for efficient and controlled model updates.

\subsection{Adapting through the Key}
\label{sec:adaption_key}

After localizing the \emph{key} with index set $\mathcal{S}$, \codename updates the model exclusively through these parameters.

Given an initial parameter vector $ \bm{\theta}^* $ and a new dataset $ \hat{\mathcal{D}} = \{(\hat{\bm{x}}^{(r)}, \hat{\bm{y}}^{(r)}) \mid 1 \leq r \leq n\} $, the \emph{key} parameters are fine-tuned by minimizing the loss function $ \mathcal{L} $ over the dataset to adapt the model to the new task. Specifically,
\begin{align*}
    \tilde{\bm{\theta}} =
    \underset{\tilde{\bm{\theta}}}{\argmin}\;
    \frac{1}{n} \sum_{r=1}^n \mathcal{L}(f(\hat{\bm{x}}^{(r)}; \tilde{\bm{\theta}}), \hat{\bm{y}}^{(r)})
    \quad \text{s.t.} \; \tilde{\theta}_j = \theta^*_j, \; \forall j \notin \mathcal{S},
\end{align*}
where $\tilde{\bm{\theta}}$ represents the partially updated parameters.

{We solve this by projected SGD restricted to coordinates in $\mathcal{S}$ with learning rate $\eta$.}
In practice, the goal is usually to meet a specified performance threshold rather than minimize the loss to absolute optimality. Thus, we define the practical solution set as the parameter configurations sufficiently close to the fully fine-tuned optimal parameters $\hat{\bm{\theta}}$ as
\begin{gather}
\begin{aligned}
    \mathcal{N}(\hat{\bm{\theta}})
    =
    \{
        \bm{\theta}' \mid \| \bm{\theta}' - \hat{\bm{\theta}} \| \leq \epsilon
    \},
\end{aligned}
\label{equ:solution}
\end{gather}
where $\epsilon > 0$ {is the parameter-space tolerance under which performance deviation from $\hat{\bm{\theta}}$ is considered acceptable}.
\begin{figure}[t]
    \centering
        \includegraphics[width=\linewidth]{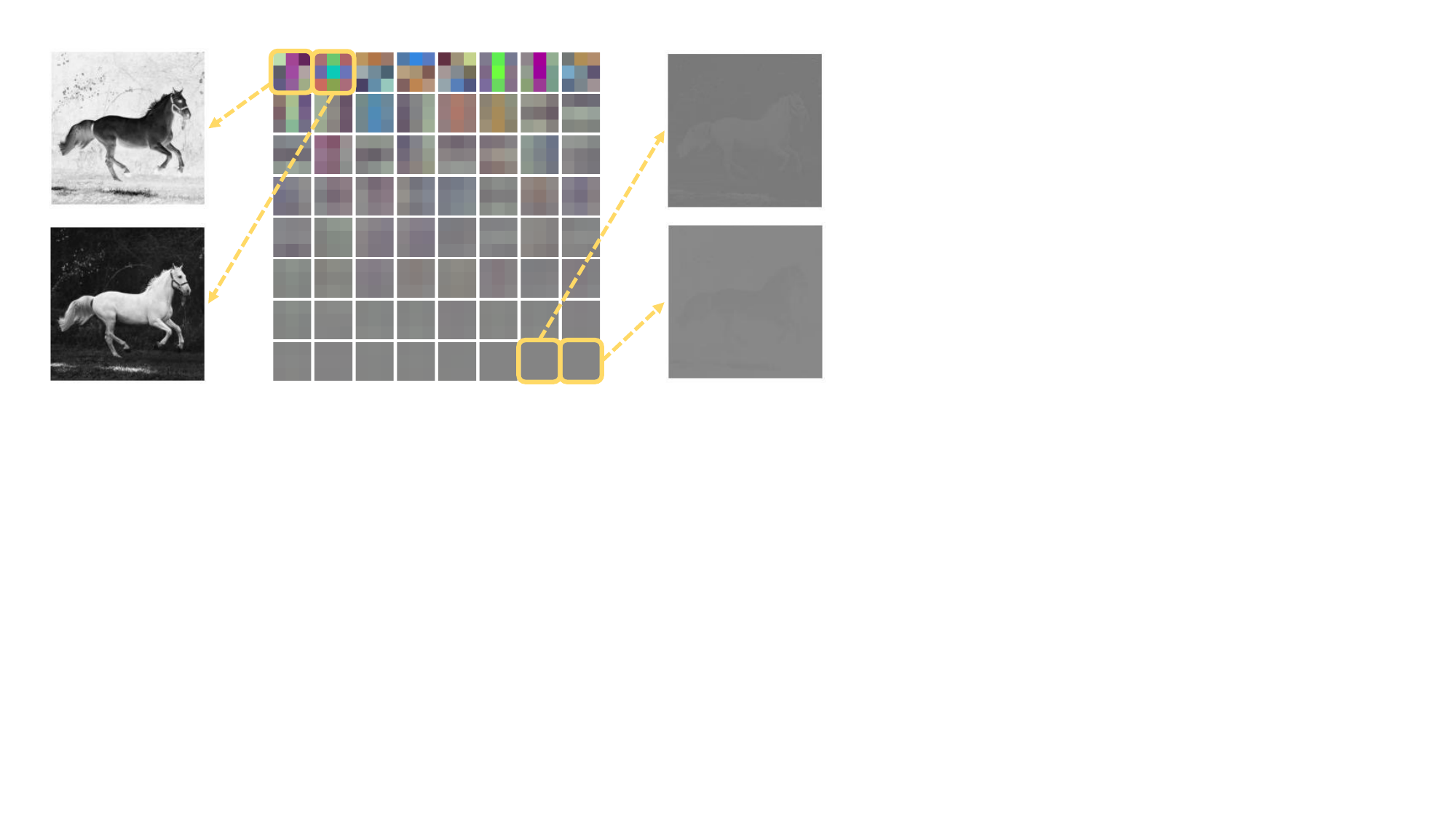}
    \caption{Visualization of feature maps (the top and bottom two) and corresponding filters (all 64 filters) from the first convolutional layer of a trained VGG model.}
    \label{fig:horse}
\end{figure}
Partial updates are effective when the adapted parameters $\tilde{\bm{\theta}}$ remain within this practical solution set, {\ie when $\|\hat{\bm{\theta}} - \tilde{\bm{\theta}}\| \le \epsilon$}. Subsequent sections theoretically (Section~\ref{sec:theory}) and empirically (Section~\ref{sec:experiments}) show that \codename's \emph{key}-updated parameters $\tilde{\bm{\theta}}$ remain within this practical solution set, demonstrating both robust authorization control and strong adaptation performance.

\begin{formal}{Remark: \textit{\codename is highly usable and deployable}}
\textit{By designating a small set of the model's own neurons as the {key}, \codename enables parameter updates to occur entirely within this subset. This allows the lock to persist across continual adaptation, satisfying the model owner's need for adaptability without sacrificing the integrity of the control mechanism. In practice, this makes it possible to share, update, and license models securely and efficiently in dynamic, real-world environments.}
\end{formal}

\section{Theoretical Analysis}
\label{sec:theory}

This section provides formal guarantees that a \emph{key} produced by \codename satisfies two essential properties: \textit{(i)} \emph{model accessibility}, ensuring the model becomes unusable without the \emph{key} (Section~\ref{sec:accessibility}), and \textit{(ii)} \emph{model adaptability}, where updating only \emph{key}-indexed weights yields performance comparable to full fine-tuning (Section~\ref{sec:adaptability}).

\paragraph{Design sketch}
The proof proceeds in four logical steps. First, Theorem~\ref{thm:variance} shows that removing the high-impact weights selected by the \emph{key} significantly collapses output variance, pushing the network toward a constant-output reference model $f^{0}$ with null utility. Next, Theorems~\ref{thm:model_performance} and~\ref{thm:upper_bound_param_diff} establish that, provided the \emph{key}-updated parameters stay within explicit distance (or standard-deviation) bounds, the model's predictions match those of a fully fine-tuned counterpart. Theorems~\ref{thm:small_param_gradient} and~\ref{thm:small_param_gradient_conv} then explain why parameters outside the \emph{key} remain stable: via H\"{o}lder's inequality, gradient magnitude at each neuron or filter is bounded proportionally to the incoming $\ell_1$-norm, so non-key parameters (which by construction have the smallest norms) receive the smallest updates and drift less during adaptation. Finally, Section~\ref{sec:theory:empirical} empirically verifies that measured parameter gaps lie well inside these theoretical bounds, confirming the practical tightness of our formal foundation.

Throughout the proofs, we adopt the following norm conventions: For vectors $\bm{x} \in \mathbb{R}^n$, $\|\bm{x}\| = \sqrt{\sum_i^n x_i^{2}}$ denotes the Euclidean norm; for matrices $\bm{A} \in \mathbb{R}^{m \times n}$, $\|\bm{A}\| = \max_{\|\bm{x}\| = 1}\|\bm{A}\bm{x}\|$ is the spectral norm; and for random variables $X$, the sub-Gaussian norm is
$\|X\|_{\varphi_{2}} = \inf\{c>0 : \mathbb{E}\exp(X^{2}/c^{2})\le 2\}$~\cite{vershynin2018high}.

\subsection{Model Accessibility}
\label{sec:accessibility}
A network is unusable if it produces almost constant outputs and cannot discriminate inputs.
We formalize this by introducing an \emph{ideally locked} reference model.

\paragraph{Reference model $f^0$}
Let $f^0$ be obtained from a pre-trained network $f^*$ by zeroing all weights:
\begin{gather}
    f^0(\bm{x})=\bm{c},\quad \forall \bm{x},
\end{gather}
where $\bm{c}$ equals the bias of the last layer.
When the test set contains uniformly distributed classes, the accuracy of $f^0$ approaches the reciprocal of the class count; in practice, its performance resembles random guessing.

The following theorem bounds output variance in terms of parameter variance, showing that shrinking a small set of high-magnitude weights drives the network toward $f^0$.

\begin{theorem}[Output variance bounded by parameter variance]
\label{thm:variance}
Let $f(\bm{x};\bm{\theta})$ be a fully connected neural network with

\begin{itemize}[leftmargin=2.2em]
    \item $L$ hidden layers and $N$ neurons per layer;
    \item activation function $\sigma$ that is Lipschitz continuous, \ie $|\sigma(a)-\sigma(b)|\le B_\sigma|a-b|$ for all $a,b\in\mathbb{R}$;
    \item weight matrices $\bm{W}^{(l)}\in\mathbb{R}^{N\times N}$ and bias vectors $\bm{b}^{(l)}\in\mathbb{R}^{N}$ for $l=1,\dots,L$.
\end{itemize}

\noindent
For the $l$-th layer, where $l = 1, \ldots, L$, assume that\footnote{Our assumptions across this study are practical and commonly used in many theoretical works on neural networks (\eg pruning~\cite{qian2021probabilistic,malach2020proving}).}

\begin{enumerate}[label=(\roman*),leftmargin=2.2em]
    \item the weights $\{W^{(l)}_{ij}\}$ are i.i.d., have zero mean, and common variance $\mathrm{Var}(\bm{W}^{(l)})$;
    \item the biases  $\{b^{(l)}_{i}\}$ are i.i.d., have zero mean, and common variance $\mathrm{Var}(\bm{b}^{(l)})$;
\end{enumerate}

\noindent
Then, for any deterministic input $\bm{x}\in\mathbb{R}^{N}$, $\mathrm{Var}(f(\bm{x} ; \bm{\theta}))$ is upper bounded by
\begin{gather}
\begin{aligned}
& \norm{\bm{x}}_2^2 (B_\sigma^2N)^L \prod_{i=1}^L \mathrm{Var}(\bm{W}^{(i)}) +
    B_\sigma^2N \cdot \mathrm{Var}(\bm{b}^{(L)})\\
    & +
    B_\sigma^2\sum_{i=1}^{L-1}\{ N \cdot \mathrm{Var}(\bm{b}^{(i)}) \prod_{j=i+1}^L [B_\sigma^2N \cdot \mathrm{Var}(\bm{W}^{(j)})]\}.
\end{aligned}
\end{gather}
\end{theorem}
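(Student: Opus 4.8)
The plan is to peel the network apart one layer at a time, track how the second moment of the activation vector propagates forward, and only at the output layer convert the resulting bound into a statement about $\mathrm{Var}(f(\bm{x};\bm{\theta}))$. Write $\bm{a}^{(0)} = \bm{x}$, $\bm{z}^{(m)} = \bm{W}^{(m)}\bm{a}^{(m-1)} + \bm{b}^{(m)}$, and $\bm{a}^{(m)} = \sigma(\bm{z}^{(m)})$. The first fact I would record is that every pre-activation coordinate is \emph{centered}: since $\bm{W}^{(m)}$ and $\bm{b}^{(m)}$ are independent of $\bm{a}^{(m-1)}$ — the latter being a function of the parameters of layers $1,\dots,m-1$ only — and have zero mean, $\mathbb{E}[z^{(m)}_i] = \sum_j \mathbb{E}[W^{(m)}_{ij}]\,\mathbb{E}[a^{(m-1)}_j] + \mathbb{E}[b^{(m)}_i] = 0$. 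Hence $\mathrm{Var}(z^{(m)}_i) = \mathbb{E}[(z^{(m)}_i)^2]$, so I can work with plain second moments throughout and, at the end, use $\mathrm{Var}(f(\bm{x};\bm{\theta})) \le \mathbb{E}\|f(\bm{x};\bm{\theta})\|^2$.

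Second, I would establish the per-layer recursion by conditioning on the parameters of layers $1,\dots,m-1$. Given that conditioning, $\bm{a}^{(m-1)}$ is fixed, while the fresh i.i.d. zero-mean entries of $\bm{W}^{(m)}$ annihilate all cross terms when squaring $z^{(m)}_i = \sum_j W^{(m)}_{ij}a^{(m-1)}_j + b^{(m)}_i$, giving $\mathbb{E}[(z^{(m)}_i)^2 \mid \text{layers }1,\dots,m-1] = \mathrm{Var}(\bm{W}^{(m)})\,\|\bm{a}^{(m-1)}\|^2 + \mathrm{Var}(\bm{b}^{(m)})$, which is the same for every $i$. Taking total expectations and writing $u_m := \mathbb{E}\|\bm{a}^{(m)}\|^2$ with $u_0 = \|\bm{x}\|^2$ yields $\mathbb{E}[(z^{(m)}_i)^2] = \mathrm{Var}(\bm{W}^{(m)})\,u_{m-1} + \mathrm{Var}(\bm{b}^{(m)})$.

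Third, I would use the Lipschitz property (with $\sigma(0)=0$, so $|\sigma(t)|\le B_L|t|$) to pass from the pre-activation to the activation, $\mathbb{E}[(a^{(m)}_i)^2] = \mathbb{E}[\sigma(z^{(m)}_i)^2] \le B_L^2\,\mathbb{E}[(z^{(m)}_i)^2]$, and sum over the $N$ neurons to obtain the scalar linear recursion $u_m \le B_L^2 N\,\mathrm{Var}(\bm{W}^{(m)})\,u_{m-1} + B_L^2 N\,\mathrm{Var}(\bm{b}^{(m)})$. Unrolling this from $u_0 = \|\bm{x}\|^2$ produces the leading term $\|\bm{x}\|^2 (B_L^2 N)^L \prod_{i=1}^L \mathrm{Var}(\bm{W}^{(i)})$ plus the telescoping bias sum $\sum_{m=1}^L B_L^2 N\,\mathrm{Var}(\bm{b}^{(m)}) \prod_{j=m+1}^L B_L^2 N\,\mathrm{Var}(\bm{W}^{(j)})$; peeling off the $m=L$ term (whose product is empty) as $B_L^2 N\,\mathrm{Var}(\bm{b}^{(L)})$ reproduces exactly the expression in the theorem, and $\mathrm{Var}(f(\bm{x};\bm{\theta})) \le u_L$ finishes the argument.

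The \textbf{main obstacle} is handling the dependency structure cleanly: the activations $\bm{a}^{(m-1)}$ are genuinely random yet independent of the current layer's parameters, so one must consistently condition on the correct history before expanding squares, and must read $\mathrm{Var}$ of the (vector-valued) output as the trace of its covariance, $\sum_i \mathrm{Var}(f_i)$. A secondary subtlety is the activation: the clean Lipschitz-to-second-moment bound relies on $\sigma(0)=0$ (true for ReLU); a general $\sigma$ would contribute an extra additive constant at each layer that must then be propagated through the unrolling. The remaining work — cross-term cancellation and summing the geometric-type recursion — is routine.
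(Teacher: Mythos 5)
Your proposal is correct and follows essentially the same layer-wise forward-propagation argument as the paper: kill cross terms using the i.i.d.\ zero-mean weights, pass through the activation via the Lipschitz constant, sum over the $N$ neurons, and unroll the resulting scalar recursion into the stated product-plus-telescoping-sum bound. If anything, your execution is more careful on two points the paper glosses over --- you condition on the earlier layers' parameters before expanding the square (the paper treats the random activations $\bm{y}^{(m-1)}$ as if they were deterministic in the recursive step), and you pass to second moments using $\sigma(0)=0$ rather than inferring $\mathrm{Var}(\sigma(z))\le B_L^2\,\mathrm{Var}(z)$ from a pointwise inequality between random variables, which is not a valid deduction as written in the paper.
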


\begin{proof}
We bound the output variance layer by layer, using the Lipschitz
property of~$\sigma$, and then propagate the bound forward.
For the first hidden layer, $\bm{W}^{(1)}_i\bm{x} + b^{(1)}_i$ is the linear transformation for one neuron.
Because the weights and biases are \textit{i.i.d.} with zero mean,
\begin{gather*}
\begin{array}{r @{\hspace{2pt}} l}
    \mathrm{Var}(\bm{W}^{(1)}_i\bm{x} + b^{(1)}_i)
    = & \mathrm{Var}(\sum^{N}_{j=1}W^{(1)}_{i,j}x_j+b^{(1)}_i) \\
    = & \mathrm{Var}(\sum^{N}_{j=1}W^{(1)}_{i,j}x_j) + \mathrm{Var}(b^{(1)}_i) \\
    = & \sum^{N}_{j=1}x^2_j\mathrm{Var}(W^{(1)}_{i,j}) + \mathrm{Var}(b^{(1)}_i) \\
    \leq & \mathrm{Var}(\bm{W}^{(1)})\norm{\bm{x}}^2_2 + \mathrm{Var}(\bm{b}^{(1)}).
\end{array}
\end{gather*}
Since $\sigma$ is $B_\sigma$-Lipschitz, the variance contraction property of Lipschitz maps gives $\mathrm{Var}(\sigma(Z)) \le B_\sigma^2\,\mathrm{Var}(Z)$ for any random variable $Z$. Applying this to $y^{(1)}_i = \sigma(\bm{W}^{(1)}_i\bm{x} + b^{(1)}_i)$,
\begin{gather*}
\begin{array}{r @{\hspace{2pt}} l}
\mathrm{Var}(y^{(1)}_i)
    \le & B_\sigma^2\mathrm{Var}(\bm{W}^{(1)}_i\bm{x} + b^{(1)}_i) \\
    \le & B_\sigma^2[\mathrm{Var}(\bm{W}^{(1)})\norm{\bm{x}}^2_2 + \mathrm{Var}(\bm{b}^{(1)})].
\end{array}
\end{gather*}
Summing over the $N$ neurons yields
\begin{gather*}
    \begin{array}{r @{\hspace{2pt}} l}
\mathrm{Var}(\bm{y}^{(1)})
& = \sum^{N}_{i=1}\mathrm{Var}(y^{(1)}_i) \\
& \le B_\sigma^2N[\mathrm{Var}(\bm{W}^{(1)})\norm{\bm{x}}^2_2 + \mathrm{Var}(\bm{b}^{(1)})].
    \end{array}
\end{gather*}
Similarly, for the $l$-th layer,
\begin{gather*}
    \mathrm{Var}(\bm{y}^{(l)}) \le B_\sigma^2N[\mathrm{Var}(\bm{W}^{(l)}) \norm{\bm{y}^{(l-1)}}^2 + \mathrm{Var}(\bm{b}^{(l)})],
\end{gather*}
and it concludes the theorem by iteratively applying the inequalities.
\end{proof}

\paragraph{Connecting locking to variance reduction}
Let $\alpha = \rho/100$ denote the locking fraction. Within each layer, the \emph{key} selects the $\lceil\alpha N^2\rceil$ entries of $\bm{W}^{(l)}$ with largest absolute value (the set $\mathcal{S}$ of Definition~\ref{def:key}) and zeros them. Because these entries are the top-$\alpha$ fraction by squared magnitude, they account for at least an $\alpha$-share of the total empirical second moment $\mathrm{Var}(\bm{W}^{(l)}) = \frac{1}{N^2}\sum_{i,j} (W_{ij}^{(l)})^2$. After zeroing, the residual variance therefore satisfies
\begin{gather}
\label{eq:variance_reduction}
    \mathrm{Var}(\bm{W}^{(l)}_\ell)
    \leq \frac{1}{N^2} \sum_{(i,j)\notin\mathcal{S}} (W_{ij}^{(l)})^2
    \leq (1-\alpha)\mathrm{Var}(\bm{W}^{(l)}).
\end{gather}
Substituting into Theorem~\ref{thm:variance}, whose dominant term contains $\prod_{i=1}^{L}\mathrm{Var}(\bm{W}^{(i)})$, locking across all layers scales the output variance by at most $(1-\alpha)^L$, an exponential decay in depth. In our experiments (Section~\ref{sec:experiments}), even a modest locking ratio ($\rho=5$) suffices to reduce accuracy to random-guess levels across all tested architectures.

\subsection{Model Adaptability}
\label{sec:adaptability}

We prove that updating only \codename's \emph{key}-indexed coordinates retains the accuracy that full fine-tuning would achieve.
\emph{Notation change.}
Theorem~\ref{thm:variance} required separate weight matrices $\bm{W}^{(l)}$ and bias vectors $\bm{b}^{(l)}$ because its variance bound treats them differently.
The remaining theorems reason about total parameter distance, so we merge biases into a single parameter matrix $\bm{\theta}^{(l)}$ per layer.
Throughout this subsection, we continue to use
$\hat{\bm{\theta}}$ to denote the fully fine-tuned parameter vector,
$\tilde{\bm{\theta}}$ the \emph{key-only} adaptation one, and
$\epsilon>0$ the target accuracy tolerance (the same tolerance used to define the practical solution set $\mathcal{N}(\hat{\bm{\theta}})$ in Section~\ref{sec:adaption_key}, corresponding to the threshold $\epsilon_0$ in Definition~\ref{def:model_functionality}). {The subscript $\ell$ is reserved for the locked variant; layers are indexed by the parenthesized superscript $l$.}

\subsubsection{Performance Under Small Parameter Distance}

The following theorem establishes a distance-based sufficient condition under which the \emph{key-only} updated model performs as well as the fully fine-tuned one.

\begin{theorem}[Performance under bounded parameter distance]
\label{thm:model_performance}
Let $f(\bm{x};\bm{\theta})$ be a pre-trained network and let
$\hat{\bm{\theta}}$ be a \emph{practical} solution obtained by
fine-tuning on $\hat{\mathcal{D}}$.
Write
$\bm{\theta}=\{\bm{\theta}^{(1)},\dots,\bm{\theta}^{(L)}\}$ with
$\bm{\theta}^{(l)}\in\mathbb{R}^{m^{(l)}\times n^{(l)}}$.
Assume

\begin{enumerate}[label=(\roman*),leftmargin=2.1em]
    \item The activation function $\sigma(\cdot)$ is Lipschitz continuous:
          $|\sigma(a)-\sigma(b)|\le B_\sigma|a-b|$;
    \item The weight matrices $\hat{\bm{\theta}}^{(l)}$ are bounded in spectral norm: $\| \hat{\bm{\theta}}^{(l)} \| \leq B_{\theta}$, where $B_{\theta} > 0$;
    \item Each $\tilde{\bm{\theta}}^{(l)}$ follows a sub-Gaussian distribution.
    \item Inputs obey $\|\bm{x}\|\le B_x$, where $B_x > 0$ is a constant.
\end{enumerate}

\noindent If the key update $\tilde{\bm{\theta}}$ obeys
\begin{gather}
\label{eq:param_diff}
    \| \tilde{\bm{\theta}} - \hat{\bm{\theta}} \|
    \leq \epsilon / (B_{\sigma}^{L-1} B_{\theta}^{L-1} B_x),
\end{gather}
then $\tilde{\bm{\theta}}$ falls within the practical solution set $\mathcal{N}(\hat{\bm{\theta}})$ in Eq.~\eqref{equ:solution}, such that $\| f(\bm{x}; \tilde{\bm{\theta}}) - f(\bm{x}; \hat{\bm{\theta}}) \| \leq \epsilon$ with a probability of
\begin{gather}
\label{eq:param_diff_prob}
    \Pr(\tilde{\bm{\theta}} \in \mathcal{N}(\hat{\bm{\theta}}))
    \geq (1 - 2\exp(-t^2))^L,
\end{gather}
where the term $B_{\theta} = CK^{(l)}(\sqrt{m^{(l)}} + \sqrt{n^{(l)}} +t)$, $K^{(l)} = \max_{i,j} \|\bm{\theta}^{(l)}_{i,j}\|_{\varphi_2}$, and $C$ is a universal constant.
\end{theorem}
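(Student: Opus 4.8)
The plan is to treat the forward map $\bm{\theta}\mapsto f(\bm{x};\bm{\theta})$ as Lipschitz in the parameters on a high-probability event, and then convert the parameter-distance hypothesis~\eqref{eq:param_diff} into the output guarantee $\|f(\bm{x};\tilde{\bm{\theta}})-f(\bm{x};\hat{\bm{\theta}})\|\le\epsilon$ (which is the operative sense in which $\tilde{\bm{\theta}}$ lies in the practical solution set). Write $h^{(0)}=\bm{x}$ and $h^{(l)}(\bm{\theta})=\sigma\bigl(\bm{\theta}^{(l)}h^{(l-1)}(\bm{\theta})\bigr)$ for the hidden layers, with the output layer linear (the source of the $B_\sigma^{L-1}$ rather than $B_\sigma^{L}$ factor), and set $\delta_l=\|h^{(l)}(\tilde{\bm{\theta}})-h^{(l)}(\hat{\bm{\theta}})\|$, so $\delta_0=0$. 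First I would derive a layer-wise recursion: applying the Lipschitz property of $\sigma$ from assumption~(i), the decomposition $\tilde{\bm{\theta}}^{(l)}h^{(l-1)}(\tilde{\bm{\theta}})-\hat{\bm{\theta}}^{(l)}h^{(l-1)}(\hat{\bm{\theta}})=\tilde{\bm{\theta}}^{(l)}\bigl(h^{(l-1)}(\tilde{\bm{\theta}})-h^{(l-1)}(\hat{\bm{\theta}})\bigr)+\bigl(\tilde{\bm{\theta}}^{(l)}-\hat{\bm{\theta}}^{(l)}\bigr)h^{(l-1)}(\hat{\bm{\theta}})$, and sub-multiplicativity of the spectral norm, I get $\delta_l\le B_\sigma\|\tilde{\bm{\theta}}^{(l)}\|\,\delta_{l-1}+B_\sigma\|\tilde{\bm{\theta}}^{(l)}-\hat{\bm{\theta}}^{(l)}\|\,\|h^{(l-1)}(\hat{\bm{\theta}})\|$.

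Next I would control the two ingredients of this recursion. For the fine-tuned branch, assumption~(ii) gives $\|\hat{\bm{\theta}}^{(l)}\|\le B_\theta$, so iterating $\|h^{(l)}(\hat{\bm{\theta}})\|\le B_\sigma B_\theta\|h^{(l-1)}(\hat{\bm{\theta}})\|$ from $\|\bm{x}\|\le B_x$ (assumption~(iii)) yields $\|h^{(l)}(\hat{\bm{\theta}})\|\le(B_\sigma B_\theta)^{l}B_x$, absorbing $\sigma(0)$ into the affine part (or assuming it vanishes). For the \emph{key}-only update, I would invoke the sub-Gaussian random-matrix deviation inequality of~\cite{vershynin2018high}: since $\tilde{\bm{\theta}}^{(l)}$ has sub-Gaussian entries with $\max_{i,j}\|\tilde{\bm{\theta}}^{(l)}_{i,j}\|_{\varphi_2}\le K^{(l)}$, its spectral norm obeys $\|\tilde{\bm{\theta}}^{(l)}\|\le CK^{(l)}(\sqrt{m^{(l)}}+\sqrt{n^{(l)}}+t)=B_\theta$ with probability at least $1-2\exp(-t^2)$ --- exactly the form of $B_\theta$ stated in the theorem. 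Intersecting this event over all $L$ layers, which factor through the layer-wise structure, gives $\|\tilde{\bm{\theta}}^{(l)}\|\le B_\theta$ for every $l$ with probability at least $(1-2\exp(-t^2))^L$, establishing~\eqref{eq:param_diff_prob}.

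On that event, unrolling the recursion with $\|\tilde{\bm{\theta}}^{(l)}\|\le B_\theta$, $\|h^{(l-1)}(\hat{\bm{\theta}})\|\le(B_\sigma B_\theta)^{l-1}B_x$, and $\|\tilde{\bm{\theta}}^{(l)}-\hat{\bm{\theta}}^{(l)}\|\le\|\tilde{\bm{\theta}}-\hat{\bm{\theta}}\|$ collapses the telescoped sum to a bound of the form $\delta_L\le B_\sigma^{L-1}B_\theta^{L}B_x\bigl(\|\tilde{\bm{\theta}}-\hat{\bm{\theta}}\|+B_\sigma B_\theta\bigr)$, where the residual $B_\sigma B_\theta$ gathers the base-layer term that does not fold into the telescoping. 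Requiring this to be at most $\epsilon$ and solving for $\|\tilde{\bm{\theta}}-\hat{\bm{\theta}}\|$ yields the sufficient condition~\eqref{eq:param_diff}; hence $\|f(\bm{x};\tilde{\bm{\theta}})-f(\bm{x};\hat{\bm{\theta}})\|\le\epsilon$ and $\tilde{\bm{\theta}}\in\mathcal{N}(\hat{\bm{\theta}})$ as in~\eqref{equ:solution}.

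I expect the main obstacle to be the probabilistic bookkeeping, not the deterministic estimate. The spectral-norm control on $\tilde{\bm{\theta}}^{(l)}$ must hold \emph{simultaneously} across all $L$ layers and with the \emph{same} constant $B_\theta$ that also caps the fixed matrices $\hat{\bm{\theta}}^{(l)}$, so $t$ and the universal constant $C$ have to be chosen consistently, and one must track how the per-layer failure probability $2\exp(-t^2)$ compounds into $(1-2\exp(-t^2))^L$ via the layer-wise independence. A secondary subtlety is keeping the unrolled recursion tight enough for the accumulated constant to land on $B_\sigma^{L-1}B_\theta^{L}B_x$ with the additive $B_\sigma B_\theta$ correction --- this pins down precisely how $\|h^{(l-1)}(\hat{\bm{\theta}})\|$ and the per-layer update norms are aggregated, and how $\sigma(0)\neq0$ is handled if one does not assume it away.
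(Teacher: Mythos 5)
Your proposal follows essentially the same route as the paper's proof: the identical add-and-subtract decomposition of $\tilde{\bm{\theta}}^{(l)}h^{(l-1)}(\tilde{\bm{\theta}})-\hat{\bm{\theta}}^{(l)}h^{(l-1)}(\hat{\bm{\theta}})$, the same layer-wise Lipschitz recursion with hidden activations bounded by powers of $B_\sigma B_\theta$ times $B_x$, and the same appeal to Vershynin's sub-Gaussian matrix concentration intersected over the $L$ layers to obtain $(1-2\exp(-t^2))^L$. The only cosmetic difference is that you let $\tilde{\bm{\theta}}^{(l)}$ (rather than $\hat{\bm{\theta}}^{(l)}$) multiply the recursive difference term, which is immaterial since both are capped by the same $B_\theta$.
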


\begin{proof}
    We bound the prediction difference between the fully fine-tuned model $f(\bm{x}; \hat{\bm{\theta}})$ and the \emph{key-only} updated model $f(\bm{x}; \tilde{\bm{\theta}})$ by recursively applying Lipschitz continuity and norm inequalities across layers. Consider
    \begin{gather*}
        f(\bm{x}; \bm{\theta}) = \bm{\theta}^{(L)} \sigma(\bm{\theta}^{(L-1)} \sigma(\cdots \sigma(\bm{\theta}^{(1)} \bm{x}))),
    \end{gather*}
    let $\bm{y}^{(l)} = \bm{\theta}^{(l)} \bm{x}^{(l)}$ for $l = 1, \ldots, L$, with {$\bm{x}^{(1)} = \bm{x}$ and} $\bm{x}^{(l)} = \sigma(\bm{y}^{(l-1)})$ for $l = {2}, \ldots, L$. Then, we have
    \begin{gather*}
    \begin{array}{r @{\hspace{2pt}} l}
        & \| f(\bm{x}; \tilde{\bm{\theta}}) - f(\bm{x}; \hat{\bm{\theta}}) \| \\
        = & \| \tilde{\bm{y}}^{(L)} - \hat{\bm{y}}^{(L)} \| \\
        = & \| \tilde{\bm{\theta}}^{(L)} \tilde{\bm{x}}^{(L)} - \hat{\bm{\theta}}^{(L)} \hat{\bm{x}}^{(L)} \| \\
        = & \| (\tilde{\bm{\theta}}^{(L)} - \hat{\bm{\theta}}^{(L)}) \hat{\bm{x}}^{(L)} + \tilde{\bm{\theta}}^{(L)} (\tilde{\bm{x}}^{(L)} - \hat{\bm{x}}^{(L)}) \| \\
        \leq & \| \tilde{\bm{\theta}}^{(L)} - \hat{\bm{\theta}}^{(L)} \| \| \hat{\bm{x}}^{(L)} \| + \| \tilde{\bm{\theta}}^{(L)} \| \| \tilde{\bm{x}}^{(L)} - \hat{\bm{x}}^{(L)} \| \\
        \leq & \| \tilde{\bm{\theta}}^{(L)} - \hat{\bm{\theta}}^{(L)} \| \| \hat{\bm{x}}^{(L)} \| + B_{\theta} \| \tilde{\bm{x}}^{(L)} - \hat{\bm{x}}^{(L)} \|.
    \end{array}
    \end{gather*}
    Since $\sigma$ is $B_\sigma$-Lipschitz and $\|\hat{\bm{\theta}}^{(l)}\|\le B_\theta$,
    \begin{gather*}
    \begin{array}{r @{\hspace{2pt}} l}
        \| \hat{\bm{x}}^{(L)} \|
        = & \| \sigma(\hat{\bm{y}}^{(L-1)}) \| \\
        \leq & B_{\sigma} \| \hat{\bm{y}}^{(L-1)} \| \\
        = & B_{\sigma} \| \hat{\bm{\theta}}^{(L-1)} \hat{\bm{x}}^{(L-1)} \| \\
        \leq & B_{\sigma} B_{\theta} \| \hat{\bm{x}}^{(L-1)} \| \\
        \leq & \cdots \\
        \leq & B_{\sigma}^{L-1} B_{\theta}^{L-1} \| \bm{x} \| \\
        \leq & B_{\sigma}^{L-1} B_{\theta}^{L-1} B_x,
    \end{array}
    \end{gather*}
    whereas for the difference of post-activations,
    \begin{gather*}
    \begin{array}{r @{\hspace{2pt}} l}
        \| \tilde{\bm{x}}^{(L)} - \hat{\bm{x}}^{(L)} \|
        = & \| \sigma(\tilde{\bm{y}}^{(L-1)}) - \sigma(\hat{\bm{y}}^{(L-1)}) \| \\
        \leq & B_{\sigma} \| \tilde{\bm{y}}^{(L-1)} - \hat{\bm{y}}^{(L-1)} \|. \\
    \end{array}
    \end{gather*}
    By repeatedly applying this nested bound across layers, we obtain
    \begin{gather*}
    \begin{array}{r @{\hspace{2pt}} l}
        & \| f(\bm{x}; \tilde{\bm{\theta}}) - f(\bm{x}; \hat{\bm{\theta}}) \| \\
        \leq & B_{\sigma}^{L-1} B_{\theta}^{L-1} B_x \| \tilde{\bm{\theta}}^{(L)} - \hat{\bm{\theta}}^{(L)} \| + B_{\sigma} B_{\theta} \| \tilde{\bm{y}}^{(L-1)} - \hat{\bm{y}}^{(L-1)} \| \\
        \leq & B_{\sigma}^{L-1} B_{\theta}^{L-1} B_x
        \sum_{l=L-1}^{L} \| \tilde{\bm{\theta}}^{(l)} - \hat{\bm{\theta}}^{(l)} \| +
        B_{\sigma}^2 B_{\theta}^2 \| \tilde{\bm{y}}^{(L-2)} - \hat{\bm{y}}^{(L-2)} \| \\
        \leq & \cdots \\
        \leq & B_{\sigma}^{L-1} B_{\theta}^{L-1} B_x
        \sum_{l=1}^{L} \| \tilde{\bm{\theta}}^{(l)} - \hat{\bm{\theta}}^{(l)} \|. \\
    \end{array}
    \end{gather*}
    Hence, demanding $\| f(\bm{x}; \tilde{\bm{\theta}}) - f(\bm{x}; \hat{\bm{\theta}}) \| \leq \epsilon$ gives
    \begin{gather*}
    \textstyle
        \| \tilde{\bm{\theta}} - \hat{\bm{\theta}} \|
        \leq \sum_{l=1}^{L} \| \tilde{\bm{\theta}}^{(l)} - \hat{\bm{\theta}}^{(l)} \|
        \leq \epsilon / (B_{\sigma}^{L-1} B_{\theta}^{L-1} B_x).
    \end{gather*}
    For sub-Gaussian weights, Vershynin's matrix concentration (\textit{Theorem 4.4.5} in~\cite{vershynin2018high}) implies that, for any $t>0$,
    \begin{gather*}
    \textstyle
        B_{\theta} = \|\bm{\theta}^{(l)}\| \leq CK^{(l)}(\sqrt{m^{(l)}} + \sqrt{n^{(l)}} +t),
    \end{gather*}
    with probability at least $1 - 2\exp(-t^2)$. Substituting this into the above bound yields the inequality~\eqref{eq:param_diff}, and the probability statement in Eq.~\eqref{eq:param_diff_prob} follows from the presence of $B_{\theta}^{L-1}$ in that expression.
\end{proof}

The bound in Theorem~\ref{thm:model_performance} is intentionally conservative. In practice, activation functions generally satisfy $B_{\sigma} \le 1$ (e.g., ReLU has $B_{\sigma}=1$), inputs are normalized so that $B_{x} \le 1$, and model parameters often have sufficiently small spectral norms to yield $B_{\theta} \le 1$. Under these conditions, the factor $(B_{\sigma} B_{\theta})^{L-1} B_{x}$ in~\eqref{eq:param_diff} decays exponentially with the network depth $L$, ensuring that $\|\tilde{\bm{\theta}} - \hat{\bm{\theta}}\|$ remains within the upper bound in typical deployments. We verify this empirically in Section~\ref{sec:theory:empirical}.

\begin{figure}
    \centering
    \includegraphics[width=1.02\linewidth]{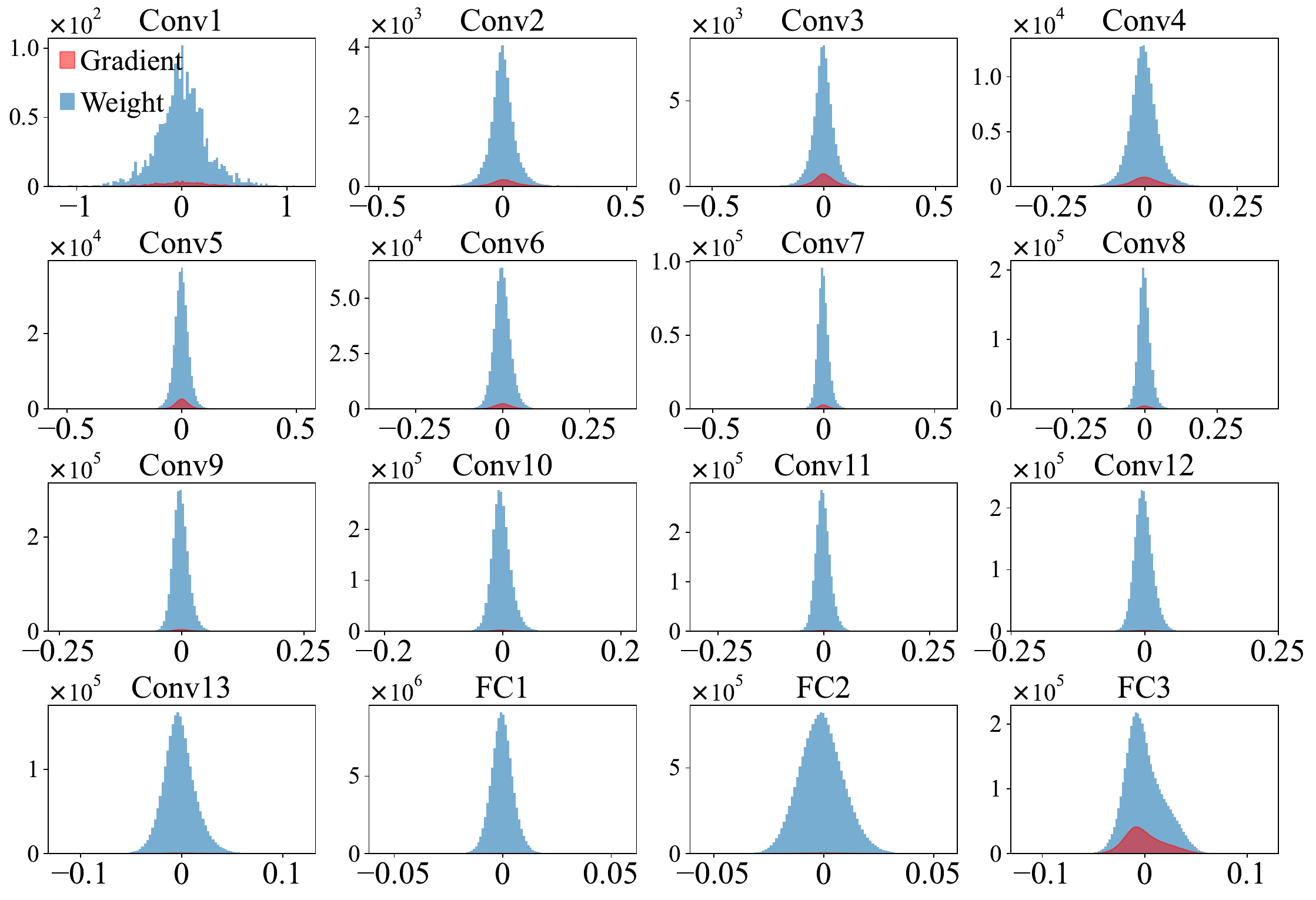}
    \caption{Layer-wise histograms of VGG-16 weights and their accumulated gradient updates during adaptation, where across all layers the vast majority of values sit at or near zero.}
    \label{fig:weight_gradient}
\end{figure}

The next theorem relaxes the requirement to one involving only the standard deviation of the parameter difference, showing that an even weaker condition still secures full performance.

\begin{theorem}[Performance under bounded standard deviation]
\label{thm:upper_bound_param_diff}
Given model parameters $\hat{\bm{\theta}}$, $\tilde{\bm{\theta}} \in \mathbb{R}^d$, where $\hat{\bm{\theta}}$ is fully updated and $\tilde{\bm{\theta}}$ is key-only (partially) updated, if
    \begin{gather}
    \label{eq:std-cond}
        \mathrm{Std}(\tilde{\bm{\theta}} - \hat{\bm{\theta}})
        \leq
        \epsilon / (B_{\sigma}^{L-1} B_{\theta}^{L-1} B_x),
    \end{gather}
    then $\tilde{\bm{\theta}}\in\mathcal{N}(\hat{\bm{\theta}})$ with probability
$(1 - 2\exp(-t^2))^{L+1}$.
\end{theorem}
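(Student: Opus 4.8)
The plan is to \emph{bootstrap} from Theorem~\ref{thm:model_performance}, which already certifies $\tilde{\bm{\theta}}\in\mathcal{N}(\hat{\bm{\theta}})$ whenever the Euclidean gap $\|\tilde{\bm{\theta}}-\hat{\bm{\theta}}\|$ obeys the bound in~\eqref{eq:param_diff}. It therefore suffices to show that the weaker hypothesis~\eqref{eq:std-cond}, phrased only in terms of $\mathrm{Std}(\tilde{\bm{\theta}}-\hat{\bm{\theta}})$, implies~\eqref{eq:param_diff} at the cost of one extra probabilistic factor, so that the overall success probability degrades from $(1-2\exp(-t^2))^{L}$ to $(1-2\exp(-t^2))^{L+1}$.

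First I would set $\bm{v}=\tilde{\bm{\theta}}-\hat{\bm{\theta}}$ and decompose it layer-wise, $\bm{v}=\{\bm{v}^{(1)},\dots,\bm{v}^{(L)}\}$, mirroring the layer-wise telescoping in the proof of Theorem~\ref{thm:model_performance}. Since $\hat{\bm{\theta}}$ is a fixed practical solution and each $\tilde{\bm{\theta}}^{(l)}$ is sub-Gaussian by assumption~(ii) of Theorem~\ref{thm:model_performance}, each $\bm{v}^{(l)}$ inherits a sub-Gaussian distribution, and its entrywise sub-Gaussian norm $K^{(l)}_{\bm{v}}=\max_{i,j}\|v^{(l)}_{i,j}\|_{\varphi_2}$ is controlled by the standard deviation of its entries, $K^{(l)}_{\bm{v}}\le c\,\mathrm{Std}(\bm{v}^{(l)})\le c\,\mathrm{Std}(\bm{v})$, the first inequality holding because the fine-tuned weights, hence their differences, are bounded in practice. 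Applying Vershynin's matrix concentration (Theorem~4.4.5 in~\cite{vershynin2018high}) to $\bm{v}^{(l)}$ then gives $\|\bm{v}^{(l)}\|\le C K^{(l)}_{\bm{v}}(\sqrt{m^{(l)}}+\sqrt{n^{(l)}}+t)$, and summing, $\|\bm{v}\|\le\sum_{l}\|\bm{v}^{(l)}\|$ is bounded by $\mathrm{Std}(\bm{v})$ times a dimensional factor that the universal constant $C$ and the definition of $B_\theta$ absorb, reproducing the right-hand side of~\eqref{eq:param_diff}. Each of the $L$ layer-wise concentration events holds with probability $1-2\exp(-t^2)$, and the one additional event tying $\mathrm{Std}(\bm{v})$ to $\|\bm{v}\|$ contributes one more factor; invoking Theorem~\ref{thm:model_performance} on the resulting intersection event with probability $(1-2\exp(-t^2))^{L+1}$ completes the argument.

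The main obstacle is the quantitative passage between the \emph{standard deviation} of the parameter gap and its Euclidean/spectral norm: $\mathrm{Std}(\bm{v})$ alone does not control $\|\bm{v}\|$ without either a bound on the mean of $\bm{v}$ (here justified because $\tilde{\bm{\theta}}$ and $\hat{\bm{\theta}}$ target the same optimum, so $\bm{v}$ is essentially centered) or the sub-Gaussian concentration that converts a small per-coordinate spread into a small spectral norm while absorbing the dimensional factors $\sqrt{m^{(l)}}+\sqrt{n^{(l)}}$ into $C$ and $B_\theta$. Pinning down the two-sided comparison $\mathrm{Std}(X)\asymp\|X\|_{\varphi_2}$ for the weight-difference distribution — which relies on the boundedness of the fine-tuned weights already assumed in Theorem~\ref{thm:model_performance}(ii) — is the delicate step; the remaining manipulations are the same Lipschitz-and-norm recursion used there.
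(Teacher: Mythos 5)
Your overall strategy, bootstrapping from Theorem~\ref{thm:model_performance} via Vershynin's concentration inequality, is the same one the paper uses, but two concrete steps diverge and one of them is a genuine gap. The gap is the comparison $K^{(l)}_{\bm{v}}\le c\,\mathrm{Std}(\bm{v}^{(l)})$ that your argument hinges on: for sub-Gaussian (even bounded) random variables the valid inequality runs the \emph{other} way, $\mathrm{Std}(X)\lesssim\|X\|_{\varphi_2}$, and the reverse bound fails in general. Boundedness of the fine-tuned weights only gives $\|X\|_{\varphi_2}\lesssim\|X\|_{\infty}$, not $\|X\|_{\varphi_2}\lesssim\mathrm{Std}(X)$; a coordinate difference equal to $M$ with probability $\delta$ and $0$ otherwise has $\mathrm{Std}\approx M\sqrt{\delta}$, which can be made arbitrarily small while the sub-Gaussian norm stays of order $M/\sqrt{\log(1/\delta)}$. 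So a small standard deviation of the parameter gap does not, by itself, control the sub-Gaussian norm that Theorem~4.4.5 of~\cite{vershynin2018high} needs, and the ``delicate step'' you flag is not repaired by the boundedness appeal. The paper's own proof avoids this trap by using the inequality in the valid direction, $\mathrm{Std}(\tilde{\bm{\theta}}-\hat{\bm{\theta}})\le\max_i\|\tilde{\theta}_i-\hat{\theta}_i\|_{\varphi_2}=K_\theta$, and in effect treating the hypothesis as a constraint on $K_\theta$ rather than on the raw standard deviation.

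The second divergence is bookkeeping. The paper applies the matrix concentration bound \emph{once} to the full $d$-dimensional difference $\tilde{\bm{\theta}}-\hat{\bm{\theta}}$, obtaining $\|\tilde{\bm{\theta}}-\hat{\bm{\theta}}\|\le CK_\theta(\sqrt{d}+t)$ as a single event of probability at least $1-2\exp(-t^2)$; intersected with the $L$ events already charged inside Theorem~\ref{thm:model_performance} for the layer-wise $B_\theta$ bounds, this yields the stated exponent $L+1$. Your layer-wise decomposition of $\bm{v}$ introduces $L$ \emph{new} concentration events for the $\bm{v}^{(l)}$, plus one more for the Std-to-norm passage, on top of the $L$ events inside Theorem~\ref{thm:model_performance}; the intersection would then carry probability $(1-2\exp(-t^2))^{2L+1}$, not $(1-2\exp(-t^2))^{L+1}$, unless you identify your layer-wise events with the ones already counted, which you cannot, since those concern $\hat{\bm{\theta}}^{(l)}$ rather than the differences. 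To match the theorem as stated you should apply the concentration bound once to the whole difference vector and work with $K_\theta$ directly.
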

\begin{proof}
Theorem 4.4.5 in~\cite{vershynin2018high} gives, for any $t>0$,
\begin{gather*}
    \| \tilde{\bm{\theta}} - \hat{\bm{\theta}} \|
    \leq
    CK_{\theta}(\sqrt{d} +t),
\end{gather*}
with a probability of at least $1 - 2\exp(-t^2)$, where $K_{\theta} = \max_{i} \| \tilde{\theta}_i - \hat{\theta}_i \|_{\varphi_2}$.
Combining this with the bound~\eqref{eq:param_diff} from
Theorem~\ref{thm:model_performance} yields
\begin{gather*}
\begin{array}{r @{\hspace{2pt}} l}
    \| \tilde{\bm{\theta}} - \hat{\bm{\theta}} \|
    \leq & CK_{\theta}(\sqrt{d} +t) \\
    \leq & \epsilon / (B_{\sigma}^{L-1} B_{\theta}^{L-1} B_x),
\end{array}
\end{gather*}
Then, we get the upper bound of the standard deviation of the parameter difference as
\begin{gather*}
\begin{array}{r @{\hspace{2pt}} l}
    \mathrm{Std}(\tilde{\bm{\theta}} - \hat{\bm{\theta}})
    \leq & \max \| \tilde{\theta}_i - \hat{\theta}_i \|_{\varphi_2} \\
    = & K_{\theta} \\
    \leq & \epsilon / [B_{\sigma}^{L-1} B_{\theta}^{L-1} B_x \cdot C(\sqrt{d} +t)].
\end{array}
\end{gather*}
\end{proof}

\begin{formal}{Remark: \textit{The bounded adaptability requirements}}
\textit{Theorems~\ref{thm:model_performance} and~\ref{thm:upper_bound_param_diff} together establish that the key-only update lies within the practical solution set $\mathcal{N}(\hat{\bm{\theta}})$ whenever the weight shift respects Eq.~\eqref{eq:param_diff}. For the architectures evaluated in Section~\ref{sec:experiments}, the empirical parameter distance consistently stays within 50--90\% of the theoretical threshold (Section~\ref{sec:theory:empirical}), confirming that the bounds are tight enough to be practically meaningful.
}
\end{formal}

Both theorems above assume that parameters outside the key remain nearly unchanged during adaptation.
\codename enforces this directly: non-key weights are frozen during key-only fine-tuning, so the distance $\|\tilde{\bm{\theta}} - \hat{\bm{\theta}}\|$ on non-key coordinates is exactly zero.
The following two theorems provide a complementary guarantee: even under full fine-tuning, gradient magnitudes on low-norm (non-key) weights remain small, bounding their drift and keeping the conditions of Theorems~\ref{thm:model_performance} and~\ref{thm:upper_bound_param_diff} satisfied.

\subsubsection{Connecting Gradient Behavior to Parameter Stability}

The two theorems below establish that gradient magnitude at each neuron (or convolutional filter) is bounded proportionally to the incoming $\ell_1$-norm of its weight vector, via a direct application of H\"{o}lder's inequality.  Consequently, neurons whose incoming weights have small $\ell_1$-norms receive small gradient updates and undergo less parameter drift during fine-tuning.

\begin{theorem}[Gradient bound for fully connected layers]
\label{thm:small_param_gradient}
Consider a fully connected neural network with ReLU activations and loss $\mathcal{L}$.  Let $\hat{\bm{\theta}}$ denote the current parameters.  For any neuron $j$ in layer $(l{-}1)$ and any neuron $i$ in layer $l$, define the backpropagated error signal
\begin{align*}
    \delta_i^{(l)}
    =
    \frac{\partial \mathcal{L}}{\partial \sigma(\bm{y}^{(l+1)})}
    \hat{\bm{\theta}}_{:,i}^{(l+1)}
    \sigma'\!\bigl(z_i^{(l)}\bigr),
\end{align*}
where $z_i^{(l)} = \hat{\bm{\theta}}_i^{(l)} \cdot \bm{y}^{(l-1)}$ is the pre-activation.  Then
\begin{align}
\label{eq:fcn_grad_bound}
    \left|\frac{\partial \mathcal{L}}{\partial \hat{\theta}_{i,j}^{(l)}}\right|
    = \bigl|\delta_i^{(l)}\bigr|
    \bigl|y_j^{(l-1)}\bigr|
    \leq \bigl|\delta_i^{(l)}\bigr|
    \bigl\|\hat{\bm{\theta}}_j^{(l-1)}\bigr\|_1
    \bigl\|\bm{y}^{(l-2)}\bigr\|_\infty.
\end{align}
Thus gradient magnitude is bounded proportionally to the incoming $\ell_1$-norm $\|\hat{\bm{\theta}}_j^{(l-1)}\|_1$.
\end{theorem}
\begin{proof}
By the chain rule,
\begin{align*}
    \frac{\partial \mathcal{L}}{\partial \hat{\theta}_{i,j}^{(l)}}
    &= \frac{\partial \mathcal{L}}{\partial \sigma(\bm{y}^{(l+1)})}
    \frac{\partial \sigma(\bm{y}^{(l+1)})}{\partial y_i^{(l)}}
    \frac{\partial y_i^{(l)}}{\partial \hat{\theta}_{i,j}^{(l)}}
    = \delta_i^{(l)} y_j^{(l-1)}.
\end{align*}
Since $y_j^{(l-1)} = \mathrm{ReLU}\bigl(\hat{\bm{\theta}}_j^{(l-1)} \cdot \bm{y}^{(l-2)}\bigr)$, we have
\begin{align*}
    \bigl|y_j^{(l-1)}\bigr|
    \leq \bigl|\hat{\bm{\theta}}_j^{(l-1)} \cdot \bm{y}^{(l-2)}\bigr|
    \leq \bigl\|\hat{\bm{\theta}}_j^{(l-1)}\bigr\|_1
    \bigl\|\bm{y}^{(l-2)}\bigr\|_\infty,
\end{align*}
where the first inequality uses $|\mathrm{ReLU}(z)| \leq |z|$ and the second is H\"{o}lder's inequality for the $(1,\infty)$ dual pair.  Substituting yields \eqref{eq:fcn_grad_bound}.
\end{proof}

The following theorem extends the same bounding technique to convolutional layers, where the inner product is replaced by a local patch convolution.
\begin{theorem}[Gradient bound for convolutional layers]
\label{thm:small_param_gradient_conv}
Consider a convolutional neural network with ReLU activations and loss $\mathcal{L}$.  Let $\hat{\bm{\theta}}$ denote the current parameters.  For input channel $s$ in layer $(l{-}1)$, the filter weights $\hat{\bm{\theta}}_{:,s}^{(l)} \in \mathbb{R}^{c_{\mathrm{out}} \times k \times k}$.  Define the backpropagated error tensor
\begin{align*}
    \delta_{c,i,j}^{(l)}
    = \frac{\partial \mathcal{L}}{\partial y_{c,i,j}^{(l)}}\,
    \sigma'\!\bigl(z_{c,i,j}^{(l)}\bigr),
\end{align*}
where $z_{c,i,j}^{(l)}$ is the pre-activation at output channel $c$ and spatial position $(i,j)$.  Then each entry of the gradient tensor satisfies
\begin{align}
\label{eq:conv_grad_bound}
    \left|\frac{\partial \mathcal{L}}{\partial \hat{\theta}_{c,s,u,v}^{(l)}}\right|
    \leq \bigl\|\bm{\delta}_{c}^{(l)}\bigr\|_1
    \bigl\|\hat{\bm{\theta}}_s^{(l-1)}\bigr\|_1
    \bigl\|\bm{y}^{(l-2)}\bigr\|_\infty,
\end{align}
where $\|\bm{\delta}_{c}^{(l)}\|_1 = \sum_{i,j}|\delta_{c,i,j}^{(l)}|$ is the spatial $\ell_1$-norm of the error signal for output channel $c$.  Thus gradient magnitude is bounded proportionally to the incoming $\ell_1$-norm $\|\hat{\bm{\theta}}_s^{(l-1)}\|_1$.
\end{theorem}
\begin{proof}
Each output feature map entry at spatial position $(i,j)$ of channel $c$ is computed as
\begin{align*}
    y_{c,i,j}^{(l)}
    = \mathrm{ReLU}\!\Bigl(
        \textstyle\sum_{s} \hat{\bm{\theta}}_{c,s}^{(l)} * y_{s,\,i:i{+}k,\,j:j{+}k}^{(l-1)}
    \Bigr),
\end{align*}
where $*$ denotes the inner product over the $k\times k$ patch.  By the chain rule, each scalar entry of the gradient tensor satisfies
\begin{align*}
    \frac{\partial \mathcal{L}}{\partial \hat{\theta}_{c,s,u,v}^{(l)}}
    = \sum_{i,j} \delta_{c,i,j}^{(l)}\, y_{s,\,i{+}u,\,j{+}v}^{(l-1)}.
\end{align*}
Since $y_{s,i,j}^{(l-1)} = \mathrm{ReLU}\bigl(\sum_{s'}\hat{\bm{\theta}}_{s',s}^{(l-1)} * \bm{y}_{s',\cdot}^{(l-2)}\bigr)$, each entry satisfies $|y_{s,i,j}^{(l-1)}| \leq \|\hat{\bm{\theta}}_s^{(l-1)}\|_1 \, \|\bm{y}^{(l-2)}\|_\infty$ by $|\mathrm{ReLU}(z)|\leq|z|$ and H\"{o}lder's inequality for the $(1,\infty)$ dual pair.  Taking absolute values and substituting,
\begin{align*}
    \left|\frac{\partial \mathcal{L}}{\partial \hat{\theta}_{c,s,u,v}^{(l)}}\right|
    &\leq \sum_{i,j} \bigl|\delta_{c,i,j}^{(l)}\bigr|
    \bigl\|\hat{\bm{\theta}}_s^{(l-1)}\bigr\|_1
    \bigl\|\bm{y}^{(l-2)}\bigr\|_\infty \\
    &= \bigl\|\bm{\delta}_{c}^{(l)}\bigr\|_1
    \bigl\|\hat{\bm{\theta}}_s^{(l-1)}\bigr\|_1
    \bigl\|\bm{y}^{(l-2)}\bigr\|_\infty,
\end{align*}
establishing~\eqref{eq:conv_grad_bound}.
\end{proof}

\begin{formal}{Remark: \textit{\codename offers strong adaptability}}
\textit{Theorems~\ref{thm:small_param_gradient} and~\ref{thm:small_param_gradient_conv} establish that gradient magnitude is bounded proportionally to the incoming $\ell_1$-norm.  Since \codename selects the highest-norm weights as the \emph{key}, the remaining (non-key) parameters have small $\ell_1$-norms and therefore receive small gradient updates at every step, accumulating less drift during fine-tuning.  Combined with the distance bounds in Theorems~\ref{thm:model_performance} and~\ref{thm:upper_bound_param_diff}, this provides formal and empirical support that \codename consistently delivers performance comparable to full fine-tuning, as confirmed across all evaluated settings (see Figure~\ref{fig:weight_gradient}).
}
\end{formal}

\subsection{Empirical Tightness of Theoretical Bounds}
\label{sec:theory:empirical}
\begin{figure}[t]
    \centering
    \includegraphics[width=\linewidth]{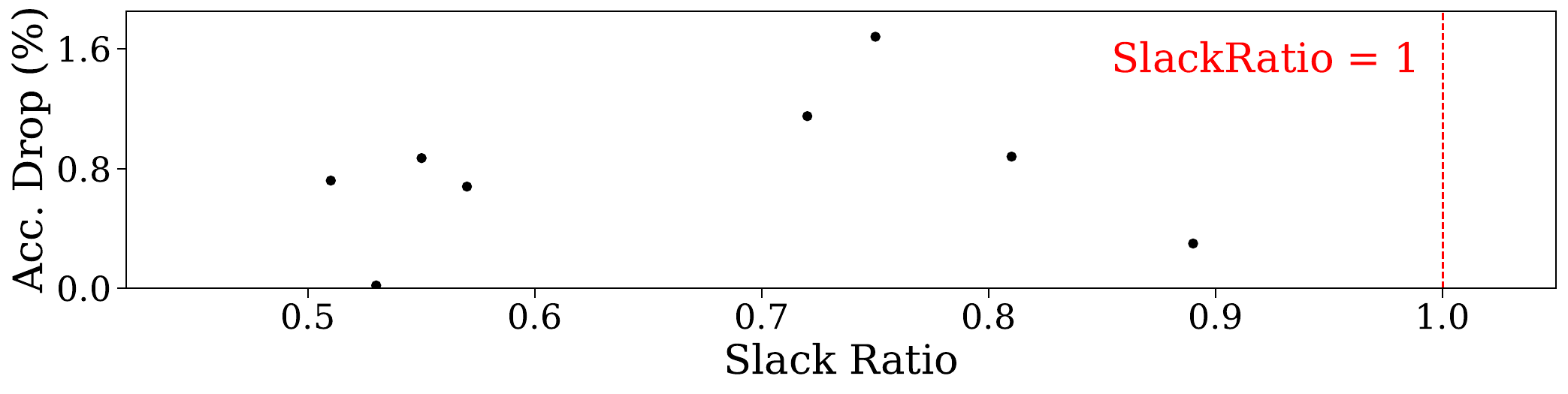}
    \caption{Accuracy drop versus \emph{slack ratio} for VGG-16 and VGG-19 on MNIST, Fashion-MNIST, CIFAR-10, and CIFAR-100, showing empirical slack below 1 while accuracy stays close to full fine-tuning.}
    \label{fig:slack}
\end{figure}

We assess the empirical tightness of Theorems~\ref{thm:model_performance} and~\ref{thm:upper_bound_param_diff} by computing the actual parameter distance $\| \tilde{\bm{\theta}} - \hat{\bm{\theta}} \|$ across various datasets.
For the theoretical bound, we compute $\epsilon / (B_\sigma^{L-1} B_\theta^{L-1} B_x)$ using constants estimated from network statistics (\eg $\mathbb{E}[B_\theta]$) following the numerical setup in~\cite{qian2021probabilistic}.
We report the resulting \emph{slack ratio} $\frac{\|\tilde{\bm{\theta}}-\hat{\bm{\theta}}\|_2}
                  {\epsilon /(B_\sigma^{L-1} B_\theta^{L-1} B_x)}$
together with accuracy drop in Figure~\ref{fig:slack} for VGG-16 and VGG-19.
For instance, on CIFAR-100 with VGG-16 the empirical distance is 0.51 while the theoretical threshold is 0.96, giving a slack ratio of 0.53.
Across all tasks, the empirical distance remained within 50--90\% of the theoretical threshold, suggesting that the bound is safe but not excessively pessimistic. Notably, models consistently retained performance within 2\% of fully fine-tuned baselines,
highlighting a consistent pattern of empirical slack.
This slack arises because the theoretical bounds are derived under worst-case assumptions, such as compounding layer-wise Lipschitz constants and uniformly tight parameter sensitivity, which overestimate actual performance degradation. In practice, sparse activation patterns, overparameterization, and flat regions in the loss landscape enable networks to absorb moderate parameter deviations with negligible performance loss. Additionally, \codename's strategy of selectively updating high-impact weights acts as an implicit regularizer, often improving generalization on smaller target datasets. Together, these factors explain why \codename delivers robust adaptation performance even beyond the strict guarantees of our theoretical bounds.

\paragraph{Summary}
The five theorems form a self-reinforcing chain: Theorem~\ref{thm:variance} guarantees that removing the \emph{key} collapses the model to an unusable state; Theorems~\ref{thm:model_performance} and~\ref{thm:upper_bound_param_diff} establish explicit distance and deviation bounds under which key-only updates match full fine-tuning; and Theorems~\ref{thm:small_param_gradient} and~\ref{thm:small_param_gradient_conv} explain \emph{why} these bounds are met by establishing that gradient magnitudes are upper-bounded proportionally to incoming $\ell_1$-norms, so that non-key parameters, which have the smallest norms, receive the smallest updates. The empirical validation above confirms that the theoretical thresholds are consistently satisfied in practice, closing the loop between the formal guarantees and observed performance.

\section{Experiments}
\label{sec:experiments}

\begin{figure}[t]
    \centering
    \includegraphics[width=1.02\linewidth]{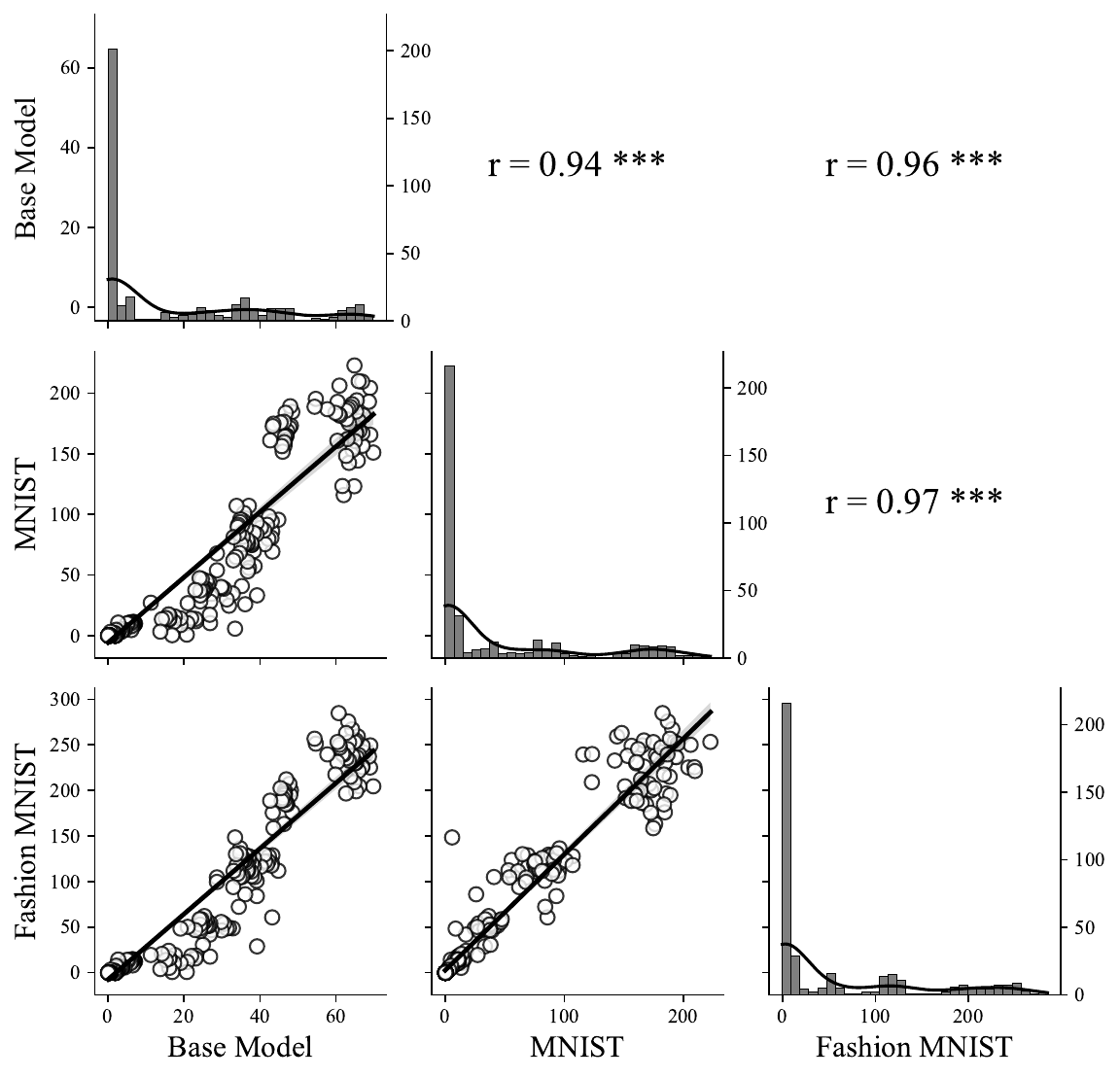}
    \caption{Visualization of the correlation between the $\ell_1$-norms of the base model (ImageNet-pre-trained) and the fine-tuned models (on MNIST and Fashion-MNIST).
    }
    \label{fig:corr}
\end{figure}

\begin{table*}[t]
\centering
\renewcommand{\arraystretch}{0.9}

\caption{Effectiveness of \codename's \emph{key} selection versus three baselines: full fine-tuning, random selection, and smallest-magnitude selection. With the \emph{key} present (authorized use), \codename  {~matches full fine-tuning accuracy and exceeds it in most settings}. Across all cells, {key removal reduces unauthorized accuracy to near-random-guess levels (qualitatively consistent with Theorem~\ref{thm:variance})}; the detailed head-to-head comparison against prior usage-control schemes is reported in Table~\ref{tab:usage_control}.}
\label{table:finetune}

\resizebox{\linewidth}{!}{
\setlength{\tabcolsep}{1pt}
\begin{threeparttable}
\scriptsize
\begin{tabular}{l rrr p{0.008\linewidth} rrr p{0.008\linewidth} rrr}
\toprule
& \multicolumn{3}{c}{CIFAR-100} && \multicolumn{3}{c}{Caltech-256} && \multicolumn{3}{c}{Flowers-102}\\
\cmidrule(l{5pt}r{5pt}){2-4} \cmidrule(l{5pt}r{5pt}){6-8} \cmidrule(l{5pt}r{5pt}){10-12}
& \makecell[c]{DenseNet-121} & \makecell[c]{ResNet-152} & \makecell[c]{ConvNeXt-V2} && \makecell[c]{DenseNet-121} & \makecell[c]{ResNet-152} & \makecell[c]{ConvNeXt-V2} && \makecell[c]{DenseNet-121} & \makecell[c]{ResNet-152} & \makecell[c]{ConvNeXt-V2} \\
\midrule
Baseline & \makecell[c]{\textbf{81.97}\%} & \makecell[c]{ 87.35\%} & \makecell[c]{88.23\%} && \makecell[c]{84.79\%} & \makecell[c]{ \textbf{85.19}\%} & \makecell[c]{92.03\%} && \makecell[c]{87.64\%} & \makecell[c]{ 87.84\%} & \makecell[c]{97.75\%}\\

Random & \makecell[c]{51.90\%} & \makecell[c]{28.57\%} & \makecell[c]{76.47\%} && \makecell[c]{38.30\%} & \makecell[c]{61.10\%} & \makecell[c]{81.82\%} && \makecell[c]{20.20\%} & \makecell[c]{~~8.33\%} & \makecell[c]{54.12\%} \\

Bottom & \makecell[c]{28.42\%} & \makecell[c]{10.31\%} & \makecell[c]{44.37\%} && \makecell[c]{16.45\%} & \makecell[c]{23.68\%} & \makecell[c]{17.67\%} && \makecell[c]{~~4.41\%} & \makecell[c]{~~1.57\%} & \makecell[c]{~~8.24\%} \\

{\codename} & \makecell[c]{81.14\%} & \makecell[c]{ \textbf{88.88}\%} & \makecell[c]{\textbf{90.17}\%} && \makecell[c]{\textbf{86.52}\%} & \makecell[c]{84.39\%} & \makecell[c]{\textbf{93.37}\%} && \makecell[c]{\,\textbf{88.14}\%} & \makecell[c]{\,\textbf{89.41}\%} & \makecell[c]{\,\textbf{98.92}\%} \\
\bottomrule
\end{tabular}

\end{threeparttable}
}
\end{table*}

In this section we proceed in three steps.
First, we verify that updating only the \emph{key} matches the accuracy of full fine-tuning on new tasks, demonstrating efficient localization and effective domain adaptation (Section \ref{sec:experiments_model_adaption}).
Then, we show that removing the \emph{key} in a static setting drops the model to random-guess levels, confirming usage control.
Finally, we compare \codename with baseline methods and highlight its ability to preserve the same control even after successive model updates (Section~\ref{sec:adaptable_usage_exp}).

\subsection{Model Adaptation}
\label{sec:experiments_model_adaption}

We begin by empirically verifying the performance guarantees discussed in the previous section. Specifically, we show that a compact \emph{key}, selected by focusing on large-magnitude parameters, consistently preserves most of the performance gains of a fully fine-tuned model.

\paragraph{Experimental setup}
We evaluate \codename on several commonly used datasets, spanning both image classification and {language classification} tasks. For image classification, we use MNIST~\cite{lecun2010mnist}, a dataset of 60,000 training and 10,000 testing grayscale images of size 28$\times$28 across 10 classes; Fashion-MNIST~\cite{xiao2017fashion}, a similar dataset with images of clothing items; CIFAR-10 and CIFAR-100~\cite{cifar10}, RGB datasets containing 50,000 training and 10,000 testing images of size 32$\times$32, where CIFAR-10 contains 10 classes and CIFAR-100 has 100 classes; Caltech-256~\cite{griffin2007caltech}, a dataset with 256 object categories for diverse classification tasks; and Flowers-102~\cite{nilsback2008automated}, a fine-grained dataset of 102 flower species. For {language classification}, we use QNLI~\cite{wang2019glue}, a question-answer inference task from the GLUE benchmark; SST-2~\cite{sst2_dataset}, a sentiment classification task; and TweetEval~\cite{barbieri-etal-2020-tweeteval}, a dataset focusing on sentiment analysis in social media text. The models evaluated include {DenseNet}, {ResNet}, and {ConvNeXt-V2} {for image classification}, and {BERT}, {RoBERTa}, and {DeBERTa} {for language classification}. All image classification models are initialized with ImageNet-pre-trained weights, while language models are initialized with their respective pre-trained checkpoints.

\begin{table*}[t]
\centering
\renewcommand{\arraystretch}{0.9}

\caption{Comparison of \codename's \emph{key-only} update and full fine-tuning on language classification tasks.}
\label{table:lm_finetune}

\resizebox{\linewidth}{!}{
\setlength{\tabcolsep}{1pt}
\begin{threeparttable}
\scriptsize
\begin{tabular}{l rrr p{0.008\linewidth} rrr p{0.008\linewidth} rrr}
\toprule
& \multicolumn{3}{c}{QNLI} && \multicolumn{3}{c}{SST-2} && \multicolumn{3}{c}{TweetEval} \\
\cmidrule(l{5pt}r{5pt}){2-4} \cmidrule(l{5pt}r{5pt}){6-8} \cmidrule(l{5pt}r{5pt}){10-12}
& \makecell[c]{RoBERTa-base} & \makecell[c]{BERT-base} & \makecell[c]{DeBERTa-base} && \makecell[c]{RoBERTa-base} & \makecell[c]{BERT-base} & \makecell[c]{DeBERTa-base} && \makecell[c]{RoBERTa-base} & \makecell[c]{BERT-base} & \makecell[c]{DeBERTa-base} \\
\midrule
Baseline & \makecell[c]{60.81\%} & \makecell[c]{61.38\%} & \makecell[c]{60.61\%} && \makecell[c]{91.74\%} & \makecell[c]{92.66\%} & \makecell[c]{94.61\%} && \makecell[c]{73.50\%} & \makecell[c]{72.80\%} & \makecell[c]{74.50\%} \\

{\codename} & \makecell[c]{{62.21}\%} & \makecell[c]{{62.68}\%} & \makecell[c]{{61.65}\%} && \makecell[c]{90.71\%} & \makecell[c]{90.56\%} & \makecell[c]{92.89\%} && \makecell[c]{67.45\%} & \makecell[c]{71.20\%} & \makecell[c]{66.20\%} \\

\bottomrule
\end{tabular}

\end{threeparttable}
}
\end{table*}

\subsubsection{Key Selection}
\label{sec:key_selection}

We analyze the weight distribution of a base model (ImageNet-pre-trained) before and after fine-tuning on MNIST and Fashion-MNIST datasets to explore the correlation between weight magnitudes and their gradient updates. Specifically, we examine how accumulated gradient updates during fine-tuning relate to the initial weight magnitudes.
The correlation graphs in Figure~\ref{fig:corr} show scatter plots, with neurons sorted by the base model neurons' $\ell_1$-norm, revealing a strong proportional relationship between the magnitude of initial neuron weights and their corresponding updates during fine-tuning. This observation supports the theoretical analysis in Section~\ref{sec:adaptability}, highlighting that large-magnitude neuron weights dominate updates during adaptation. Furthermore, the histograms (with the vertical axis on the right representing the number of neurons) demonstrate that such significant neuron weights are relatively sparse, with most parameters being near zero. These findings validate the heuristic of prioritizing large-magnitude weights for effective and efficient adaptation.

Next, we assess the fraction of weights required to achieve performance comparable to full fine-tuning. To maintain a compact \emph{key}, we tested unlocking 1\% to 10\% of the neurons with the highest magnitudes (\(\ell_1\)-norm). Our experiments show that 5\% consistently achieves strong performance across tasks, while 1-4\% suffices for simpler datasets like CIFAR-10 but underperforms on more complex tasks. Increasing beyond 5\% provides only marginal improvements. Balancing performance with key size, we adopt 5\% as the standard, as it performs reliably across all settings.

Figure~\ref{fig:ablation} illustrates two representative scenarios. For ResNet-18 on CIFAR-10 (left), high performance is observed across all tested fractions, with 5\% of neurons offering an optimal balance between accuracy and convergence speed. For DenseNet-121 on CIFAR-100 (right), an extreme case, performance drops significantly when using only 1-4\% of neurons, underscoring the robustness of selecting 5\% for domain adaptation. These results reaffirm the effectiveness of focusing updates on a small, high-impact subset of weights, ensuring efficient and reliable adaptation in diverse tasks.

\begin{figure}[t]
    \centering
    \includegraphics[width=1.02\linewidth]{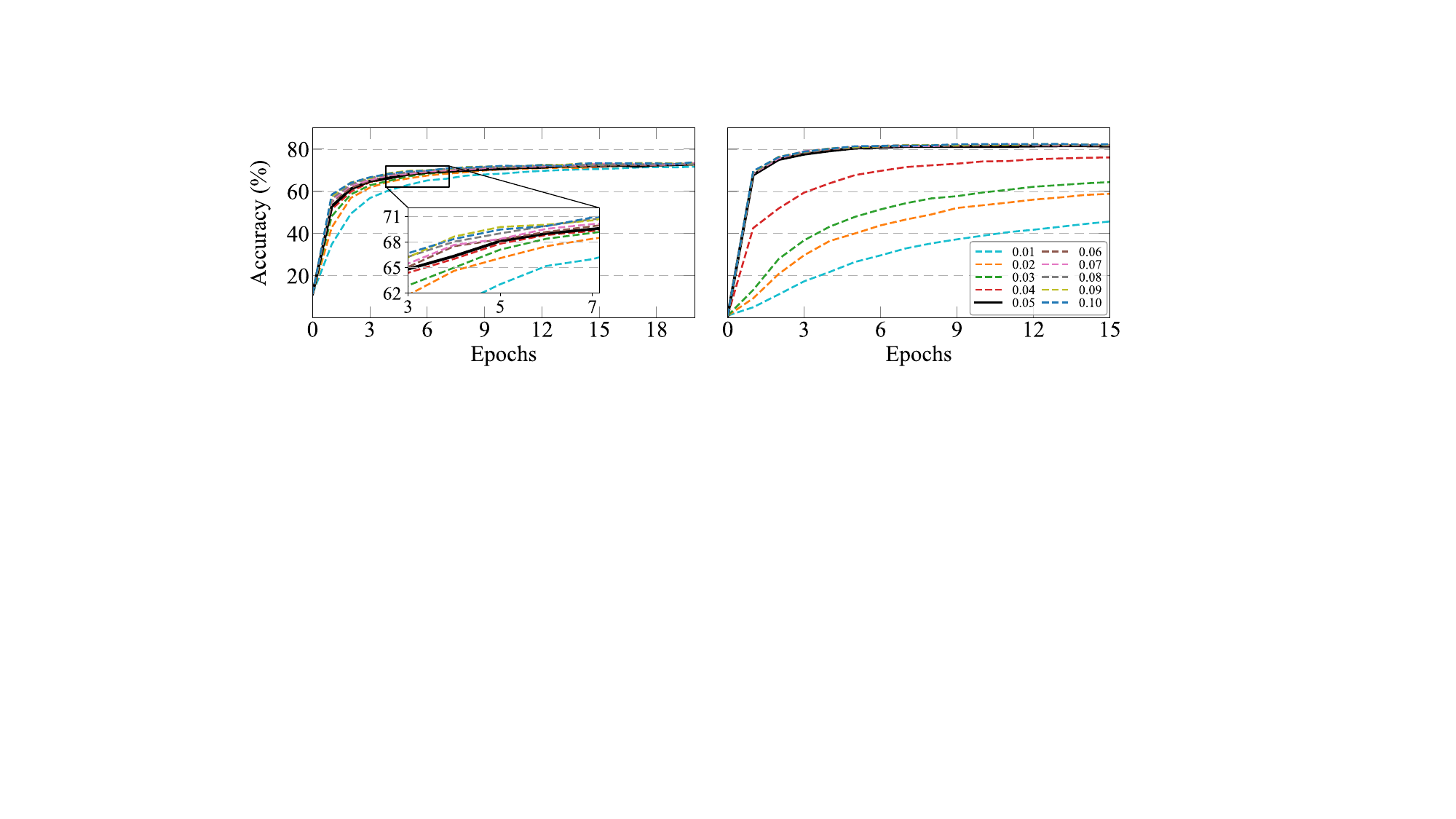}
    \caption{Training convergence of ResNet-18 on CIFAR-10 (left) and DenseNet-121 on CIFAR-100 (right) under varying \emph{key} sizes (1\%--10\% of neurons).}
\label{fig:ablation}
\end{figure}

\begin{figure*}[t]
    \centering
    \begin{subfigure}{.49\textwidth}
        \centering
        \includegraphics[width=1.01\linewidth]{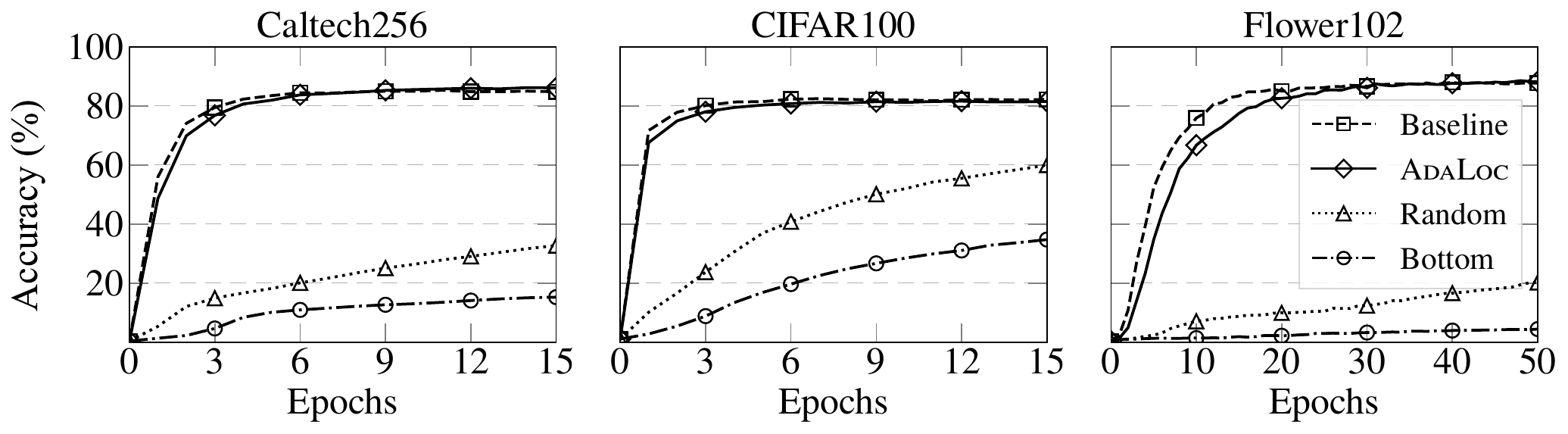}
        \label{fig:cifar10_result}
    \end{subfigure}
    \hspace{.005\textwidth}
    \begin{subfigure}{.49\textwidth}
        \centering
        \includegraphics[width=1.01\linewidth]{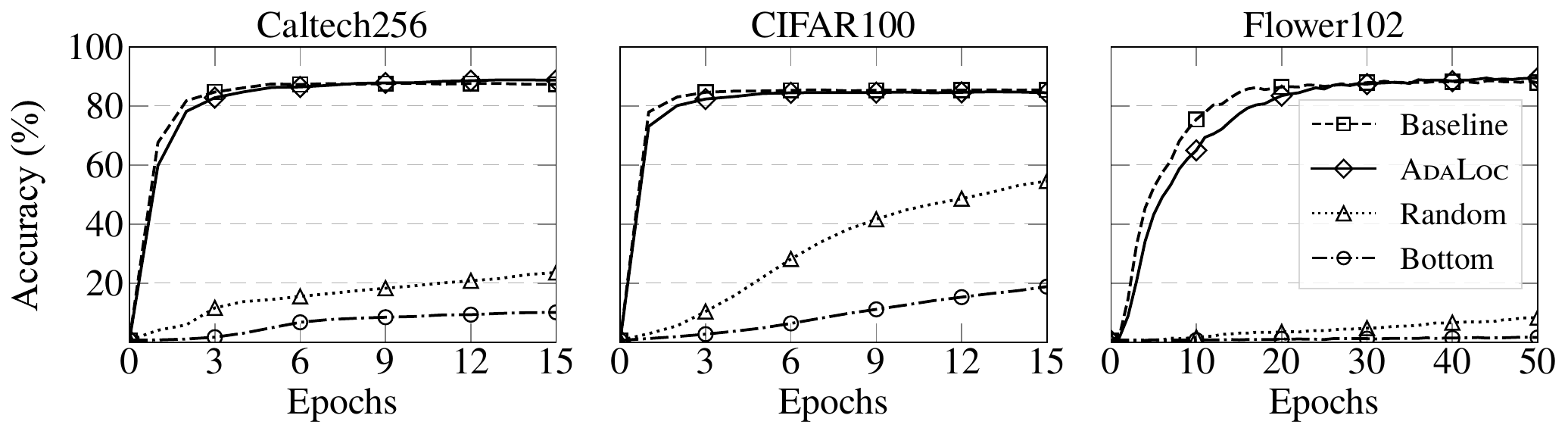}
        \label{fig:cifar100_result}
    \end{subfigure}
    \caption{Training convergence for DenseNet-121 (left) and ResNet-152 (right) with \codename \emph{key-only} updates, full fine-tuning, and two baselines (random and smallest-magnitude) across datasets.}
    \label{fig:convergence}
\end{figure*}

\begin{table*}[t]
\centering
\renewcommand{\arraystretch}{0.9}

\caption{\codename with relaxed neuron selection on {image classification}: selecting the top 5\% neurons generally yields the best accuracy, while sampling 5\% from the top 10\% or 15\% remains nearly as effective.}
\label{table:finetune_relax}

\resizebox{\linewidth}{!}{
\setlength{\tabcolsep}{1pt}
\begin{threeparttable}
\scriptsize
\begin{tabular}{l rrr p{0.008\linewidth} rrr p{0.008\linewidth} rrr}
\toprule
& \multicolumn{3}{c}{CIFAR-100} && \multicolumn{3}{c}{Caltech-256} && \multicolumn{3}{c}{Flowers-102}\\
\cmidrule(l{5pt}r{5pt}){2-4} \cmidrule(l{5pt}r{5pt}){6-8} \cmidrule(l{5pt}r{5pt}){10-12}
& \makecell[c]{DenseNet-121} & \makecell[c]{ResNet-152} & \makecell[c]{ConvNeXt-V2} && \makecell[c]{DenseNet-121} & \makecell[c]{ResNet-152} & \makecell[c]{ConvNeXt-V2} && \makecell[c]{DenseNet-121} & \makecell[c]{ResNet-152} & \makecell[c]{ConvNeXt-V2} \\
\midrule

{\codename$_{\text{top~}5\%}$} & \makecell[c]{81.14\%} & \makecell[c]{ {88.88}\%} & \makecell[c]{{90.17}\%} && \makecell[c]{{86.52}\%} & \makecell[c]{ 84.39\%} & \makecell[c]{{93.37}\%} && \makecell[c]{{88.14}\%} & \makecell[c]{{89.41}\%} & \makecell[c]{{98.92}\%} \\

\codename$_{5\%\text{~from~top~}10\%}$ & \makecell[c]{79.26\%} & \makecell[c]{80.94\%} & \makecell[c]{90.06\%} && \makecell[c]{81.06\%} & \makecell[c]{82.08\%} & \makecell[c]{92.75\%} && \makecell[c]{84.41\%} & \makecell[c]{84.61\%} & \makecell[c]{98.43\%} \\

\codename$_{5\%\text{~from~top~}15\%}$ & \makecell[c]{77.71\%} & \makecell[c]{78.31\%} & \makecell[c]{89.89\%} && \makecell[c]{79.57\%} & \makecell[c]{78.03\%} & \makecell[c]{93.25\%} && \makecell[c]{85.29\%} & \makecell[c]{80.29\%} & \makecell[c]{98.24\%} \\

\bottomrule

\end{tabular}

\end{threeparttable}
}
\end{table*}

\subsubsection{Effective Model Adaptation}
We assess adaptation under a strict 5\% parameter budget, where only 5\% of weights are used as the \emph{key}. Specifically, we compare full fine-tuning (baseline) and two alternative 5\% strategies that select neurons randomly or by small magnitude (bottom) against \codename, which prioritizes large-magnitude weights. These comparisons show that focusing updates on large-magnitude weights is markedly more effective under the same budget.

\paragraph{Results}
Table~\ref{table:finetune} presents the results of image classification across various models and datasets. The proposed approach, \codename, achieves performance comparable to full fine-tuning, attaining an accuracy of up to 98.92\%. This result reaffirms that \codename effectively identifies the critical neurons that drive model adaptation. Compared to approaches that select neurons randomly or based on small-magnitude weights, \codename delivers substantial accuracy improvements, with gains of up to 90.68 percentage points. As shown in Figure~\ref{fig:convergence}, \codename also maintains convergence rates on par with or exceeding those of full fine-tuning, confirming its efficiency and effectiveness across diverse tasks.

For language classification, a similar trend is observed as shown in Table~\ref{table:lm_finetune}, with \codename achieving an accuracy of up to 92.89\%, which is on par with full fine-tuning. This further validates \codename's effectiveness in identifying key neurons essential for model adaptation across different architectures and tasks.

In some cases, \codename even outperforms full fine-tuning. This could happen for two reasons. First, both approaches rely on finding local optima, and by focusing on a smaller subset of critical parameters, \codename may occasionally land on a better local solution. Second, for smaller fine-tuning datasets, updating fewer parameters can act as a form of regularization, helping reduce overfitting and improve generalization. This behavior underscores the advantage of selectively fine-tuning high-impact parameters.

\begin{formal2}{Takeaway}
     \textit{By tuning only the key-designated weights, chosen for their strong capacity to absorb updates, \codename reaches full fine-tuning performance for authorized users while leaving the remainder of the network untouched, as supported by Theorems~\ref{thm:model_performance}--\ref{thm:small_param_gradient_conv}.}
\end{formal2}

\begin{table*}[t]
\centering
\renewcommand{\arraystretch}{0.9}

\caption{Relaxed neuron selection on language classification shows that selecting the top 5\% neurons or sampling 5\% from the top 10\% or 15\% all yield similarly effective performance.}
\label{table:lm_finetune_relax}

\resizebox{\linewidth}{!}{
\setlength{\tabcolsep}{1pt}
\begin{threeparttable}
\scriptsize
\begin{tabular}{l rrr p{0.008\linewidth} rrr p{0.008\linewidth} rrr}
\toprule
& \multicolumn{3}{c}{QNLI} && \multicolumn{3}{c}{SST-2} && \multicolumn{3}{c}{TweetEval} \\
\cmidrule(l{5pt}r{5pt}){2-4} \cmidrule(l{5pt}r{5pt}){6-8} \cmidrule(l{5pt}r{5pt}){10-12}
& \makecell[c]{RoBERTa-base} & \makecell[c]{BERT-base} & \makecell[c]{DeBERTa-base} && \makecell[c]{RoBERTa-base} & \makecell[c]{BERT-base} & \makecell[c]{DeBERTa-base} && \makecell[c]{RoBERTa-base} & \makecell[c]{BERT-base} & \makecell[c]{DeBERTa-base} \\
\midrule

{\codename$_{\text{top}~5\%}$} & \makecell[c]{62.21\%} & \makecell[c]{62.68\%} & \makecell[c]{61.65\%} && \makecell[c]{90.71\%} & \makecell[c]{90.56\%} & \makecell[c]{92.89\%} && \makecell[c]{67.45\%} & \makecell[c]{71.20\%} & \makecell[c]{66.20\%} \\

\codename$_{5\%\text{~from~top~}10\%}$ & \makecell[c]{62.29\%} & \makecell[c]{62.29\%} & \makecell[c]{60.72\%} && \makecell[c]{90.94\%} & \makecell[c]{91.05\%} & \makecell[c]{91.97\%} && \makecell[c]{69.15\%} & \makecell[c]{71.60\%} & \makecell[c]{57.55\%} \\

\codename$_{5\%\text{~from~top~}15\%}$ & \makecell[c]{62.84\%} & \makecell[c]{62.80\%} & \makecell[c]{61.50\%} && \makecell[c]{91.28\%} & \makecell[c]{91.97\%} & \makecell[c]{92.66\%} && \makecell[c]{72.30\%} & \makecell[c]{73.55\%} & \makecell[c]{66.95\%} \\

\bottomrule

\end{tabular}

\end{threeparttable}
}
\end{table*}

\subsection{Adaptable Usage Control}
\label{sec:adaptable_usage_exp}

To evaluate the adaptability and effectiveness of \codename, we compare it with three active model usage control baselines: AdvParams \cite{Xue_2023}, NNSplitter \cite{zhou2023nnsplitter}, and CoreLocker \cite{wang2024corelocker}. Each method follows the same workflow in which a pre-trained DNN is converted into a \emph{locked model} and an authorization key that record selected weight indices and their original values. The locked model, stored in normal memory, provides only limited functionality, while the secrets are kept in a protected environment that attackers cannot reach. Once the locked model is combined with its secrets inside this secure enclave, full accuracy is restored. AdvParams injects carefully crafted adversarial perturbations into chosen weights so that, without the corrective key, the network's outputs become unusable; NNSplitter uses reinforcement learning to partition the network into a public portion and a small secret block that must be recombined for normal inference; and CoreLocker disables influential channels and relies on reinstating them to recover full performance.

\subsubsection{{Key} Pool}

To demonstrate the flexibility of \codename's \emph{key} selection, we introduce a \emph{key} pool, which relaxes the strict requirement of selecting the top 5\% of neurons by $\ell_1$ magnitude. Instead, neurons are sampled from broader subsets, specifically the top 10\% and 15\% of weights, to accommodate scenarios where precise ranking might be constrained by hardware or other mechanisms.

\paragraph{Results}
{Tables~\ref{table:finetune_relax} and~\ref{table:lm_finetune_relax} present the results of relaxed sampling. Across all models and datasets, selecting the top 5\% of neurons consistently yields the highest accuracy, reaching up to 98.92\%. Although sampling from a broader set, such as selecting 5\% from the top 10\% or 15\%, results in a slight performance degradation, it still achieves an accuracy of up to 98.43\%, closely matching the performance of the top 5\%.
}

\begin{table*}[t]
\centering
\caption{Authorized vs.\ unauthorized performance of ResNet-152 adapted from MNIST, Fashion-MNIST, CIFAR-10, and CIFAR-100. \codename consistently preserves usage control, delivering full performance for authorized users while forcing unauthorized usage to completely unusable levels.}
\label{tab:usage_control}
\renewcommand\tabcolsep{6pt}
\resizebox{\linewidth}{!}{
\begin{tabular}{lccccccccc}
\toprule
\multirow{2.6}{*}{Method} & \multicolumn{2}{c}{MNIST} & \multicolumn{2}{c}{Fashion-MNIST} & \multicolumn{2}{c}{CIFAR-10} & \multicolumn{2}{c}{CIFAR-100} & \multirow{2.6}{*}{Usage Control} \\
\cmidrule(lr){2-3}\cmidrule(lr){4-5}\cmidrule(lr){6-7}\cmidrule(lr){8-9}
 & Authorized & Unauthorized & Authorized & Unauthorized & Authorized & Unauthorized & Authorized & Unauthorized &  \\
\midrule
AdvParams~\cite{Xue_2023}           & 98.29\% & \cgreen{10.62\%} & 93.07\% & \cred{87.19\%} & 94.27\% & \cred{91.80\%} & 87.35\% & \cred{87.01\%} & \cred{\xmark} \\
NNSplitter~\cite{zhou2023nnsplitter} & 98.29\% & \cgreen{10.00\%} & 93.07\% & \cred{90.29\%} & 94.27\% & \cred{93.40\%} & 87.35\% & \cred{86.20\%} & \cred{\xmark} \\
CoreLocker~\cite{wang2024corelocker} & 98.29\% & \cgreen{10.00\%} & 93.07\% & \cred{81.71\%} & 94.27\% & \cred{80.07\%} & 87.35\% & \cred{76.26\%} & \cred{\xmark} \\
\codename                           & 98.29\% & \cgreen{10.00\%} & 92.60\% & \cgreen{10.01\%} & 91.26\% & \cgreen{10.00\%} & 88.88\% & \cgreen{~~1.02\%} & \cgreen{\cmark}\\
\bottomrule
\end{tabular}
}
\end{table*}

\begin{formal2}{Takeaway} \textit{These findings validate \textsc{AdaLoc}'s efficient localization property: even with a relaxed access-key criterion, adaptation remains nearly as effective, demonstrating robustness and practical flexibility.}
\end{formal2}

\subsubsection{Static Usage Control}

We first verify that \codename enforces usage control in a static setting, \ie without any further model updates. Theorem~\ref{thm:variance} predicts that removing the high-impact \emph{key} parameters drives the network toward the constant-output reference $f^0$, whose accuracy on a class-balanced test set {approaches $1/K$}. Across the (architecture, dataset) grid in Table~\ref{table:finetune}, key removal consistently reduces unauthorized accuracy to {near-random-guess levels, qualitatively in line with this prediction. Because per-cell values cluster tightly around chance, we defer the head-to-head comparison against prior usage-control schemes to Table~\ref{tab:usage_control}.}

To demonstrate practicality beyond convolutional backbones, we also test widely used transformer architectures. A Vision Transformer~\cite{dosovitskiy2020image} trained on CIFAR-100 falls from 81.4\% to 1.2\% when the \emph{key} is removed, and a BERT on AG News~\cite{zhang2015character} (4 classes) drops from 87.2\% to 25.3\%, effectively the random level. Because these drops occur regardless of architecture or dataset size, we reaffirm that \codename is model-agnostic: removing the key reliably degrades inference to random guessing, thereby maintaining strict static usage control.

\subsubsection{Usage Control during Model Adaptation}

We next assess whether this usage control property holds during model adaptation to new tasks. To simulate dynamic adaptation scenarios, we begin with MNIST as the base dataset where the \emph{key} is generated, and fine-tune the locked model on target datasets including Fashion-MNIST, CIFAR-10, and CIFAR-100. This setup allows us to evaluate whether \codename can maintain effective usage control while enabling legitimate updates and preventing unauthorized exploitation.

\paragraph{Results}
Table~\ref{tab:usage_control} compares \codename with three representative usage control baselines across Fashion-MNIST, CIFAR-10, and CIFAR-100. After fine-tuning, the baselines still let adversaries recover up to {93.40\%} accuracy, leaving protection almost ineffective. Under the same conditions, \codename holds unauthorized accuracy near random guess level (10.01\% on Fashion-MNIST, 10.00\% on CIFAR-10, 1.02\% on CIFAR-100), while authorized users retain high performance. The small accuracy drop for legitimate users is a worthwhile trade-off for the much stronger usage control delivered by \codename.

\subsubsection{Restoration Attack}\label{sec:experiments_restoration} {We further test whether a powerful adversary can restore the locked model. Of the adaptive vectors admitted by the threat model (Section~\ref{sec:threat_model}), only fine-tuning requires empirical evaluation; distillation, pruning, and adapter insertion inherit the static analysis of CoreLocker~\cite{wang2024corelocker}. We consider a powerful white-box adversary holding up to $20\%$ of in-distribution data, and evaluate fine-tuning recovery on Fashion-MNIST, CIFAR-10, and CIFAR-100. The worst-case recovered accuracy is {28.3\%} on Fashion-MNIST, $26.6\%$ on CIFAR-10, and {6.1\%} on CIFAR-100, against authorized accuracies of $92.60\%$, $91.26\%$, and $88.88\%$.}

\begin{formal2}{Takeaway}\textit{\codename enables continual model updates across tasks and datasets while restricting unauthorized use, {even under adaptive fine-tuning with up to $20\%$ in-distribution data.}}
\end{formal2}

\subsubsection{Discussion of Key Management}
Key management is a practical concern for \codename, since the \emph{key} must remain confidential across the model's continual updates. Two issues dominate. First, the key must be stored and transmitted without leakage; \codename integrates with trusted execution environments (TEEs), which isolate key material from untrusted memory and anchor unlock operations to hardware. Second, distributing keys to authorized users in dynamic deployments requires standard authentication, with public-key cryptography for verified retrieval and multi-factor authentication for caller verification.

Concretely, the TEE-resident payload comprises only the protected values $\{\theta^*_i\}_{i\in\mathcal{S}}$. The index set $\mathcal{S}$ itself need not be hidden: the locking transform zeros exactly the coordinates in $\mathcal{S}$ (Section~\ref{sec:usage_control:model_locking}), so an adversary with the leaked locked model can recover $\mathcal{S}$ as the support of zero entries, and protecting it inside the enclave provides no additional security. In deployment $\mathcal{S}$ can therefore reside in normal memory as a sparse coordinate map. The locking ratio is the principal knob controlling the resulting footprint: $\rho=5\%$ is adopted across architectures as the reliable setting (Section~\ref{sec:key_selection}), while smaller fractions remain effective on simpler tasks (Figure~\ref{fig:ablation}), allowing the resident size to be tuned to the available secure-memory budget. Half-precision value storage offers an additional reduction with the usual accuracy-versus-footprint trade-off.

Beyond enclave residency, \codename can be paired with hierarchical keying systems, where a master key governs subsets of the \emph{key}, to reduce distribution complexity at scale. Dynamic re-keying after each update further ensures that previously exposed keys cannot be exploited against subsequent revisions.

\begin{formal2}{Takeaway} \textit{With these key-management measures in place, every model revision stays locked to outsiders, while an authorized controller can always recover full, up-to-date accuracy by presenting the current access key.}
\end{formal2}

\section{Related Work}
\paragraph{Model IP protection}
Existing protections for deployed neural networks fall into two regimes: passive verification and active locking. Watermarking~\cite{uchida2017watermark,guo2018watermarking,kahng1998watermarking,wang2023data} supports post-hoc ownership verification but does not gate utility, so a leaked model remains fully usable. Active locking instead conditions inference on a secret. Chakraborty~\etal~\cite{chakraborty2020hardware} rely on specialized hardware and a key-dependent back-propagation process to obscure weights, which presumes a hardware-aware training pipeline and is not trivially applicable to pre-trained networks. Fan~\etal~\cite{fan2019rethinking} introduce a ``passport layer'' that preserves the network's accuracy only when the correct passport is provided. Data-based keying~\cite{chen2018piracy,pyone2020key} embeds a secret key into the training data, requiring a retraining pass for every new key. Weight-perturbation schemes inject adversarial deltas~\cite{Xue_2023}, partition the network into a secret block~\cite{zhou2023nnsplitter}, or disable influential channels~\cite{wang2024corelocker}, recovering accuracy when the secret values are written back. Across these active-locking schemes the locked parameters are treated as fixed at deployment time, so post-deployment fine-tuning shifts them and the lock degrades. \codename targets exactly this regime by confining every authorized update to the \emph{key} itself.

\paragraph{Trusted execution environment}
Trusted execution environments (TEEs), such as ARM TrustZone~\cite{ngabonziza2016trustzone,ye2018tzslicer}, have been used to deploy partitioned model variants in which a small protected portion of the computation runs inside the enclave~\cite{sun2023shadownet}, an implementation pattern we follow. The secure memory available to trusted applications is typically around 10\,MB~\cite{sun2023shadownet}, far smaller than the footprint of modern DNNs. \codename keeps only the \emph{key}-indexed weights inside the enclave, and the locking ratio $\rho$ together with reduced-precision storage (Section~\ref{sec:adaptable_usage_exp}) lets the resident size be tuned to the available budget. Side-channel attacks and other enclave-level threats~\cite{sidechannel2024usenix} are outside our scope.

\paragraph{Partial fine-tuning and weight importance}
Parameter-efficient fine-tuning updates a small subspace of weights to adapt a model while leaving the rest fixed~\cite{shen2021partial,ye2023partial}, with adapters~\cite{pan2022st}, LoRA layers~\cite{hu2021lora}, and bias-only tuning~\cite{zaken2021bitfit}. The subspace is chosen for parameter efficiency, not for the role its weights play in predictive capacity, and these methods carry no usage-control guarantee. A complementary line in pruning shows that a small subset of parameters disproportionately shapes network behavior while the rest can be removed with limited accuracy loss~\cite{li2017pruning,molchanov2016pruning,blalock2020state,qian2021probabilistic,malach2020proving}. \codename selects this same high-impact subset as the \emph{key}: removing it renders the model unusable (Theorem~\ref{thm:variance}), and authorized adaptation is then confined to the same subset. The selection criterion ($\ell_1$-magnitude) matches magnitude-based pruning, but the role is reversed; pruning discards the low-magnitude remainder, while \codename freezes it and updates only the key. A LoRA adapter cannot fill the same role because it augments the network rather than gating its native weights, so the underlying model retains usable accuracy without the adapter. Theorems~\ref{thm:model_performance}--\ref{thm:small_param_gradient_conv} bound the conditions under which key-only updates preserve the lock.

\section{Conclusion and Future Work}
We have shown that accessibility and adaptability, properties that prior keying schemes treated as conflicting, can be unified by confining every authorized update to a compact \emph{access key} drawn from a model's high-impact weights. This reframes model usage control as a property of the weight geometry rather than of the runtime that hosts it, and positions \codename as a model-intrinsic complement to enclave-based protections such as Google's Private AI Compute~\cite{google_private_ai_compute}: hardware secures the runtime, while the \emph{key} restricts the utility of any leaked artifact. The construction inverts a utility heuristic into a security primitive: pruning treats high-impact weights as the part worth keeping, while \codename treats the same locus as a withholdable secret. Empirically, authorized fine-tuning is indistinguishable from unconstrained fine-tuning on every backbone we tested, while unauthorized accuracy collapses to near-random, a separation that prior key-based defenses do not achieve.

Several limitations bound these results. Our evaluation covers discriminative vision and text models; generative and instruction-tuned regimes, where capacity is more diffuse, may admit smaller or less stable \emph{key} sets. The bounds assume layer-wise Lipschitz constants and sub-Gaussian activation tails, conditions we verify empirically but do not stress under extreme distribution shift. The threat model trusts the storage channel for the \emph{key}, and adaptive adversaries who exploit knowledge of the \emph{key}-selection rule remain outside our current analysis.

Four directions follow. Scaling \emph{key}-only adaptation to generative and large language models, where frequent fine-tuning is now standard organizational practice~\cite{malec2024genai}. Key rotation and revocation policies that preserve the lock when the access channel is compromised. Multi-stakeholder delegation, including threshold sharing of the \emph{key} indices to tolerate partial exposure or collusion. Integration with watermarking and provenance signals, so that ownership verification and usage control are enforced jointly at deployment.

\section*{Acknowledgement}
We thank Chi Wang and Zhiyong Ma for helpful discussions and experimental support. This work was supported, in part, by the UK AI Security Institute (AISI) and the Department of Industry, Science and Resources (DISR), Australia, under the Alignment Project, and by CityUHK's Start-up Grant.

\bibliographystyle{IEEEtran}
\bibliography{main}

@String{Computing = "Computing" }

@String{Computer = "{IEEE} Computer" }

@inproceedings{wang25aim,
author = {Wang, Zihan and Ma, Zhongkui and Feng, Xinguo and Mei, Zhiyang and Ma, Ethan and Wang, Derui and Xue, Minhui and Bai, Guangdong},
title = {{AI} Model Modulation with Logits Redistribution},
year = {2025},
isbn = {9798400712746},
doi = {10.1145/3696410.3714737},
booktitle = {Proceedings of the ACM on Web Conference (WWW)},
pages = {4699--4709}
}

@article{griffin2007caltech,
  title={Caltech-256 object category dataset},
  author={Griffin, Gregory and Holub, Alex and Perona, Pietro},
  year={2007}
}

@inproceedings{nilsback2008automated,
  title={Automated flower classification over a large number of classes},
  author={Nilsback, Maria-Elena and Zisserman, Andrew},
  booktitle={Proceedings of the Sixth Indian Conference on Computer Vision, Graphics \& Image Processing},
  doi={10.1109/ICVGIP.2008.47},
  pages={722--729},
  year={2008}
}

@misc{google_private_ai_compute,
  title        = {Private {AI} {C}ompute: {o}ur {n}ext {s}tep in {b}uilding {p}rivate and {h}elpful {AI}},
  author       = {Jay Yagnik},
  howpublished = {\url{https://blog.google/technology/ai/google-private-ai-compute/}},
  year         = {2025},
  month        = nov,
  note         = {{G}oogle Blog}
}

@inproceedings{zhang2015character,
author = {Zhang, Xiang and Zhao, Junbo and LeCun, Yann},
title = {Character-level convolutional networks for text classification},
volume={28},
year = {2015},
doi = {10.5555/2969239.2969312},
booktitle = {Advances in Neural Information Processing Systems (NeurIPS)},
pages = {649--657}
}

@inproceedings{wang2019glue,
title={{GLUE}: A Multi-Task Benchmark and Analysis Platform for Natural Language Understanding},
author={Alex Wang and Amanpreet Singh and Julian Michael and Felix Hill and Omer Levy and Samuel R. Bowman},
booktitle={Proceedings of International Conference on Learning Representations (ICLR)},
year={2019}
}

@InProceedings{zhou2023nnsplitter,
  title = 	 {{NNS}plitter: An Active Defense Solution for {DNN} Model via Automated Weight Obfuscation},
  author =       {Zhou, Tong and Luo, Yukui and Ren, Shaolei and Xu, Xiaolin},
  booktitle = 	 {Proceedings of the 40th International Conference on Machine Learning (ICML)},
  pages = 	 {42614--42624},
  year = 	 {2023},
  volume = 	 {202}
}

@inproceedings{wang2024corelocker,
  title={CoreLocker: Neuron-level Usage Control},
  author={Wang, Zihan and Ma, Zhongkui and Feng, Xinguo and Sun, Ruoxi and Wang, Hu and Xue, Minhui and Bai, Guangdong},
  booktitle={IEEE Symposium on Security and Privacy (IEEE S\&P)},
  doi = {10.1109/SP54263.2024.00233},
  pages={2497--2514},
  year={2024}
}

@inproceedings{kahng1998watermarking,
  title={Watermarking techniques for intellectual property protection},
  author={Kahng, Andrew B and Lach, John and Mangione-Smith, William H and Mantik, Stefanus and Markov, Igor L and Potkonjak, Miodrag and Tucker, Paul and Wang, Huijuan and Wolfe, Gregory},
  booktitle={Proceedings of the 35th Annual Design Automation Conference (DAC)},
  doi = {10.1145/277044.277240},
  pages={776--781},
  year={1998}
}

@inproceedings{uchida2017watermark,
author = {Uchida, Yusuke and Nagai, Yuki and Sakazawa, Shigeyuki and Satoh, Shin'ichi},
title = {Embedding Watermarks into Deep Neural Networks},
year = {2017},
doi = {10.1145/3078971.3078974},
booktitle = {Proceedings of ACM on International Conference on Multimedia Retrieval (ICMR)},
pages = {269--277}
}

@article{yuan2019adversarial,
  title={Adversarial Examples: Attacks and Defenses for Deep Learning},
  author={Xiaoyong Yuan and Pan He and Qile Zhu and Xiaolin Li},
  journal={IEEE Transactions on Neural Networks and Learning Systems},
  year={2019},
  volume={30},
  number={9},
  pages={2805-2824}
}

@inproceedings{goodfellow2014explaining,
  title        = {Explaining and Harnessing Adversarial Examples},
  author={Goodfellow, Ian J and Shlens, Jonathon and Szegedy, Christian},
  booktitle    = {Proceedings of the Third International Conference on Learning Representations (ICLR)},
  year={2015}
}

@inproceedings{sun2023shadownet,
  title={ShadowNet: A Secure and Efficient On-device Model Inference System for Convolutional Neural Networks},
  author={Sun, Zhichuang and Sun, Ruimin and Liu, Changming and Chowdhury, Amrita Roy and Lu, Long and Jha, Somesh},
  booktitle={IEEE Symposium on Security and Privacy (IEEE S\&P)},
  doi = {10.1109/SP46215.2023.10179382},
  pages={1596--1612},
  year={2023}
}

@misc{malec2024genai,
  author = {Melissa Malec},
  title = {Generative {AI} Statistics: Insights and Emerging Trends for 2025},
  year = {2024},
  month = {December},
  url = {https://hatchworks.com/blog/gen-ai/generative-ai-statistics/},
}

@ARTICLE{Xue_2023,
  author={Xue, Mingfu and Wu, Zhiyu and Zhang, Yushu and Wang, Jian and Liu, Weiqiang},
  journal={IEEE Transactions on Emerging Topics in Computing}, 
  title={AdvParams: An Active {DNN} Intellectual Property Protection Technique via Adversarial Perturbation Based Parameter Encryption}, 
  year={2023},
  volume={11},
  number={3},
  pages={664--678},
  doi={10.1109/TETC.2022.3231012}
}

@INPROCEEDINGS{chen2018piracy,
  author={Chen, Mingliang and Wu, Min},
  booktitle={IEEE International Workshop on Information Forensics and Security (WIFS)}, 
  title={Protect Your Deep Neural Networks from Piracy}, 
  year={2018},
  pages={1--7},
  doi={10.1109/WIFS.2018.8630791}
}

@INPROCEEDINGS{pyone2020key,
  author={Pyone, April and Maung, Maung and Kiya, Hitoshi},
  booktitle={IEEE 9th Global Conference on Consumer Electronics (GCCE)}, 
  title={Training {DNN} Model with Secret Key for Model Protection}, 
  year={2020},
  pages={818--821},
  doi={10.1109/GCCE50665.2020.9291813}
  }

@inproceedings{malach2020proving,
author = {Malach, Eran and Yehudai, Gilad and Shalev-Shwartz, Shai and Shamir, Ohad},
title = {Proving the lottery ticket hypothesis: pruning is all you need},
year = {2020},
pages={6682--6691},
doi = {10.5555/3524938.3525558},
booktitle = {Proceedings of the 37th International Conference on Machine Learning (ICML)},
articleno = {620}
}

@inproceedings {sidechannel2024usenix,
author = {Tushar Nayan and Qiming Guo and Mohammed Al Duniawi and Marcus Botacin and Selcuk Uluagac and Ruimin Sun},
title = {{SoK}: All You Need to Know About {On-Device} {ML} Model Extraction - The Gap Between Research and Practice},
booktitle = {33rd USENIX Security Symposium (USENIX Security)},
year = {2024},
isbn = {978-1-939133-44-1},
pages = {5233--5250}
}

@inproceedings{molchanov2016pruning,
title={Pruning Convolutional Neural Networks for Resource Efficient Inference},
author={Pavlo Molchanov and Stephen Tyree and Tero Karras and Timo Aila and Jan Kautz},
booktitle = {Proceedings of the Fifth International Conference on Learning Representations (ICLR)},
year={2017}
}

@ARTICLE{wang2023data,
  author={Wang, Zihan and Byrnes, Olivia and Wang, Hu and Sun, Ruoxi and Ma, Congbo and Chen, Huaming and Wu, Qi and Xue, Minhui},
  journal={IEEE Transactions on Computational Social Systems}, 
  title={Data Hiding With Deep Learning: A Survey Unifying Digital Watermarking and Steganography}, 
  year={2023},
  volume={10},
  number={6},
  pages={2985--2999},
  doi={10.1109/TCSS.2023.3268950}
  }

@INPROCEEDINGS{hyuntak2021pruning,
  author={Lim, Hyuntak and Roh, Si-Dong and Park, Sangki and Chung, Ki-Seok},
  booktitle={the 31st IEEE International Workshop on Machine Learning for Signal Processing (MLSP)}, 
  title={Robustness-Aware Filter Pruning for Robust Neural Networks Against Adversarial Attacks}, 
  year={2021},
  pages={1-6},
  doi={10.1109/MLSP52302.2021.9596121}}

@inproceedings{fan2019rethinking,
  title={Rethinking deep neural network ownership verification: Embedding passports to defeat ambiguity attacks},
  author={Fan, Lixin and Ng, Kam Woh and Chan, Chee Seng},
  booktitle={Advances in Neural Information Processing Systems (NeurIPS)},
  year={2019}
}

@inproceedings{chakraborty2020hardware,
  author = {Chakraborty, Abhishek and Mondal, Ankit and Srivastava, Ankur},
  title={Hardware-Assisted Intellectual Property Protection of Deep Learning Models}, 
  year = {2020},
  booktitle = {Proceedings of the 57th ACM/IEEE Design Automation Conference (DAC)},
  doi={10.1109/DAC18072.2020.9218651}
}

@inproceedings{dosovitskiy2020image,
  title={An Image is Worth 16x16 Words: Transformers for Image Recognition at Scale},
  author={Alexey Dosovitskiy and Lucas Beyer and Alexander Kolesnikov and Dirk Weissenborn and Xiaohua Zhai and Thomas Unterthiner and Mostafa Dehghani and Matthias Minderer and Georg Heigold and Sylvain Gelly and Jakob Uszkoreit and Neil Houlsby},
  booktitle={Proceedings of the 9th International Conference on Learning Representations (ICLR)},
  year={2021}
}

@article{xiao2017fashion,
      title={{F}ashion-{MNIST}: a Novel Image Dataset for Benchmarking Machine Learning Algorithms}, 
      author={Han Xiao and Kashif Rasul and Roland Vollgraf},
      year={2017},
      journal={arXiv preprint arXiv:1708.07747},
}

@inproceedings {shuofeng24trans,
author = {Shuofeng Liu and Zihan Wang and Minhui Xue and Long Wang and Yuanchao Zhang and Guangdong Bai},
title = {Being Transparent is Merely the Beginning: Enforcing Purpose Limitation with Polynomial Approximation},
booktitle = {33rd USENIX Security Symposium (USENIX Security)},
year = {2024},
isbn = {978-1-939133-44-1},
pages = {6507--6524}
}

@inproceedings{ye2018tzslicer,
  title={TZSlicer: Security-aware dynamic program slicing for hardware isolation},
  author={Ye, Mengmei and Sherman, Jonathan and Srisa-An, Witawas and Wei, Sheng},
  booktitle={IEEE International Symposium on Hardware Oriented Security and Trust (HOST)},
  pages={17--24},
  year={2018},
  doi={10.1109/HST.2018.8383886}
}

@techreport{cifar10,
  title       = {Learning Multiple Layers of Features from Tiny Images},
  author      = {Krizhevsky, Alex},
  year        = {2009},
  institution = {University of Toronto},
  url         = {https://www.cs.toronto.edu/~kriz/learning-features-2009-TR.pdf}
}

@InProceedings{qian2021probabilistic,
  title = 	 {A Probabilistic Approach to Neural Network Pruning},
  author =       {Qian, Xin and Klabjan, Diego},
  booktitle = 	 {Proceedings of the 38th International Conference on Machine Learning (ICML)},
  pages = 	 {8640--8649},
  year = 	 {2021},
  volume = 	 {139},
}

@ARTICLE{chaotic2021lin,
  author={Lin, Ning and Chen, Xiaoming and Lu, Hang and Li, Xiaowei},
  journal={IEEE Transactions on Computer-Aided Design of Integrated Circuits and Systems}, 
  title={Chaotic Weights: A Novel Approach to Protect Intellectual Property of Deep Neural Networks}, 
  year={2021},
  volume={40},
  number={7},
  pages={1327-1339},
  doi={10.1109/TCAD.2020.3018403}
}

@book{vershynin2018high, 
  place={Cambridge}, 
  series={Cambridge Series in Statistical and Probabilistic Mathematics}, 
  title={High-Dimensional Probability: An Introduction with Applications in Data Science}, 
  publisher={Cambridge University Press}, 
  author={Vershynin, Roman}, 
  year={2018}, 
  collection={Cambridge Series in Statistical and Probabilistic Mathematics}
  }

@article{lecun2010mnist,                
    title={Gradient-based learning applied to document recognition},              
    author={LeCun, Yann and Bottou, L{\'e}on and Bengio, Yoshua and Haffner, Patrick},                            
    journal={Proceedings of the IEEE},    
    volume={86},                          
    number={11},                          
    pages={2278--2324},                   
    year={1998},                          
    publisher={IEEE}                      
  }

@inproceedings{guo2018watermarking,
author = {Guo, Jia and Potkonjak, Miodrag},
title = {Watermarking deep neural networks for embedded systems},
year = {2018},
isbn = {9781450359504},
doi = {10.1145/3240765.3240862},
booktitle = {Proceedings of the International Conference on Computer-Aided Design (ICCAD)},
articleno = {133},
numpages = {8}
}

@inproceedings{li2017pruning,
title={Pruning Filters for Efficient ConvNets},
author={Hao Li and Asim Kadav and Igor Durdanovic and Hanan Samet and Hans Peter Graf},
booktitle = {Proceedings of the Fifth International Conference on Learning Representations (ICLR)},
year={2017}
}

@INPROCEEDINGS{ngabonziza2016trustzone,
  author={Ngabonziza, Bernard and Martin, Daniel and Bailey, Anna and Cho, Haehyun and Martin, Sarah},
  booktitle={Proceedings of the 2nd IEEE International Conference on Collaboration and Internet Computing (CIC)}, 
  title={TrustZone Explained: Architectural Features and Use Cases}, 
  year={2016},
  volume={},
  number={},
  pages={445--451},
  doi={10.1109/CIC.2016.065}
  }

@inproceedings{blalock2020state,
 author = {Blalock, Davis and Gonzalez Ortiz, Jose Javier and Frankle, Jonathan and Guttag, John},
 booktitle = {Proceedings of Machine Learning and Systems (MLSys)},
 pages = {129--146},
 title = {What is the State of Neural Network Pruning?},
 volume = {2},
 year = {2020}
}

@inproceedings{sst2_dataset,
    title = "Recursive Deep Models for Semantic Compositionality Over a Sentiment Treebank",
    author = "Socher, Richard  and
      Perelygin, Alex  and
      Wu, Jean  and
      Chuang, Jason  and
      Manning, Christopher D.  and
      Ng, Andrew  and
      Potts, Christopher",
    booktitle = "Proceedings of the Conference on Empirical Methods in Natural Language Processing (EMNLP)",
    year = "2013",
    pages = "1631--1642"
}

@inproceedings{barbieri-etal-2020-tweeteval,
    title = "{T}weet{E}val: Unified Benchmark and Comparative Evaluation for Tweet Classification",
    author = "Barbieri, Francesco  and
      Camacho-Collados, Jose  and
      Espinosa Anke, Luis  and
      Neves, Leonardo",
    booktitle = "Findings of the Association for Computational Linguistics (ACL)",
    year = "2020",
    doi = "10.18653/v1/2020.findings-emnlp.148",
    pages = "1644--1650"
}

@inproceedings{shen2021partial,
  title={Partial is better than all: Revisiting fine-tuning strategy for few-shot learning},
  author={Shen, Zhiqiang and Liu, Zechun and Qin, Jie and Savvides, Marios and Cheng, Kwang-Ting},
  booktitle={Proceedings of the AAAI Conference on Artificial Intelligence (AAAI)},
  volume={35},
  number={11},
  pages={9594--9602},
  year={2021}
}

@article{ye2023partial,
  title={Partial Fine-Tuning: A Successor to Full Fine-Tuning for Vision Transformers},
  author={Ye, Peng and Huang, Yongqi and Tu, Chongjun and Li, Minglei and Chen, Tao and He, Tong and Ouyang, Wanli},
  journal={arXiv preprint arXiv:2312.15681},
  year={2023}
}

@inproceedings{pan2022st,
title={{ST}-Adapter: Parameter-Efficient Image-to-Video Transfer Learning},
author={Junting Pan and Ziyi Lin and Xiatian Zhu and Jing Shao and Hongsheng Li},
  booktitle={Advances in Neural Information Processing Systems (NeurIPS)},
  volume={35},
  pages={26462--26477},
  year={2022}
}

@inproceedings{hu2021lora,
title={Lo{RA}: Low-Rank Adaptation of Large Language Models},
author={Edward J. Hu and Yelong Shen and Phillip Wallis and Zeyuan Allen-Zhu and Yuanzhi Li and Shean Wang and Lu Wang and Weizhu Chen},
booktitle={Proceedings of the 10th International Conference on Learning Representations (ICLR)},
year={2022}
}

@inproceedings{zaken2021bitfit,
    title = {{B}it{F}it: Simple Parameter-efficient Fine-tuning for Transformer-based Masked Language-models},
    author={Zaken, Elad Ben and Goldberg, Yoav and Ravfogel, Shauli},
    booktitle = {Proceedings of the 60th Annual Meeting of the Association for Computational Linguistics (ACL)},
    year = {2022},
    doi = {10.18653/v1/2022.acl-short.1},
    pages = {1--9}
}
\appendix
\section*{Ethics Considerations}
This research focuses on developing a framework for securely updating AI models under strict usage control. It relies solely on open-source datasets and publicly available pre-trained models, involves no human subjects or sensitive data, and touches no proprietary systems, ensuring compliance with ethical standards. The work is focused on theoretical analysis and experimental validation in controlled environments, without direct application to real-world services.
While \codename improves model usage control and adaptation, there is a potential risk if the \emph{key} is leaked or mismanaged. In such a scenario, unauthorized parties could bypass the intended update restrictions.
To mitigate such risks, \codename is intended for deployment in controlled environments and should integrate robust access management to prevent unauthorized usage.

\section*{Open Science Policy}
We will publicly release \codename upon publication under an open-source license. The repository is hosted at \url{https://github.com/MLresearchAI/ADALOC} and includes training and evaluation code, configuration files, and reproduction instructions, supporting community access and verification while adhering to data privacy and security guidelines.

\section*{LLM Usage Considerations}
Large language models were used solely for grammar and spelling correction, and all resulting text was manually reviewed by the authors for accuracy.

\end{document}